\newtheorem{theorem}{Theorem}
\newtheorem{lemma}{Lemma}
\def\BibTeX{{\rm B\kern-.05em{\sc i\kern-.025em b}\kern-.08em
    T\kern-.1667em\lower.7ex\hbox{E}\kern-.125emX}}
\begin{document}
\pagestyle{plain}

\title{Advancing Practical Homomorphic Encryption for Federated Learning: Theoretical Guarantees and Efficiency Optimizations}

\author{
    Ren-Yi Huang\IEEEauthorrefmark{1}\orcidlink{0000-0002-1245-xxxx},
    Dumindu Samaraweera, \textit{Member, IEEE}\IEEEauthorrefmark{2}\orcidlink{0000-0003-4097-5585},
    Prashant Shekhar\IEEEauthorrefmark{2}\orcidlink{0000-0002-1245-xxxx},
    \\and J. Morris Chang, \textit{Senior Member, IEEE}\IEEEauthorrefmark{1}\orcidlink{0000-0002-1245-xxxx}

    \thanks{\IEEEauthorrefmark{1}Department of Electrical Engineering, University of South Florida, Tampa, FL (emails: hr219@usf.edu, chang5@usf.edu).}
    \thanks{\IEEEauthorrefmark{2}Department of Mathematics, Embry-Riddle Aeronautical University, Daytona Beach, FL (emails: samarawg@erau.edu, shekharp@erau.edu).}  
}
\maketitle

\begin{abstract}
Federated Learning (FL) enables collaborative model training while preserving data privacy by keeping raw data locally stored on client devices, preventing access from other clients or the central server. However, recent studies reveal that sharing model gradients creates vulnerability to Model Inversion Attacks, particularly Deep Leakage from Gradients (DLG), which reconstructs private training data from shared gradients. While Homomorphic Encryption has been proposed as a promising defense mechanism to protect gradient privacy, fully encrypting all model gradients incurs high computational overhead. Selective encryption approaches aim to balance privacy protection with computational efficiency by encrypting only specific gradient components. However, the existing literature largely overlooks a theoretical exploration of the spectral behavior of encrypted versus unencrypted parameters, relying instead primarily on empirical evaluations. To address this gap, this paper presents a framework for theoretical analysis of the underlying principles of selective encryption as a defense against model inversion attacks. We then provide a comprehensive empirical study that identifies and quantifies the critical factors, such as model complexity, encryption ratios, and exposed gradients, that influence defense effectiveness. Our theoretical framework clarifies the relationship between gradient selection and privacy preservation, while our experimental evaluation demonstrates how these factors shape the robustness of defenses against model inversion attacks. Collectively, these contributions advance the understanding of selective encryption mechanisms and offer principled guidance for designing efficient, scalable, privacy-preserving federated learning systems.
\end{abstract}

\begin{IEEEkeywords}
Privacy-preserving Federated Learning, Homomorphic Encryption, Security, Privacy
\end{IEEEkeywords}

\section{Introduction} \label{introduction}
Federated Learning (FL) is a privacy-preserving paradigm that enables distributed model training while ensuring that clients’ local data remains private from both other clients and the central server. However, FL is still vulnerable to model inversion attacks, such as Deep Leakage from Gradients (DLG)~\cite{b1,b2,b3}, where an adversary can reconstruct private training data by analyzing shared gradients. This presents a significant threat, particularly when a malicious server or attacker exploits gradient updates to recover sensitive client information.

Over the years, several privacy-preserving techniques have been investigated for federated learning across different application settings. These include Differential Privacy (DP), which protects against inference attacks by injecting noise (additive noise) into gradients or model updates; Secure Multi-Party Computation (SMPC), which enables collaborative computation without exposing individual inputs; and Homomorphic Encryption (HE), which permits computations directly on encrypted data while retaining full data utility. Among these approaches, HE is particularly notable for its ability to preserve complete data utility, allowing encrypted model aggregation without revealing raw data and making it a compelling choice for privacy-enhancing mechanisms in FL. Hence, it ensures that even if the server is compromised, model parameters remain protected. However, HE introduces substantial computational and communication overhead, rendering it impractical for many large-scale implementations. To address this challenge, recent research~\cite{b4,b5} has explored selective encryption, encrypting only a subset of the most important model parameters while leaving the rest unencrypted.

While selective encryption shows promise, a key challenge lies in effectively identifying the critical factors and securing the most sensitive model parameters, while balancing the additional computational cost without compromising privacy. Furthermore, analyzing the behavior of encrypted and unencrypted    gradients is also crucial for understanding potential vulnerabilities, as gradient behaviors may reveal patterns that adversaries could exploit in DLG-based reconstruction attacks. Even with selective encryption, none of the existing methods offers a truly efficient, scalable, and practical HE-based solution, especially for complex deep learning models. Without a deeper analysis of both the behavior of encrypted and unencrypted gradients, these methods may still leave FL models susceptible to leakage, making it imperative to develop a more robust and computationally feasible encryption strategy. 

In this work, our goal is to propose a comprehensive mathematical framework to analyze the key parameters governing selective encryption and to identify the minimum encryption ratio that ensures an equivalent level of security, while avoiding unnecessary computational and communication overhead. To achieve this goal, our contributions are as follows: 
\begin{enumerate}
    \item We develop a novel theoretical framework to analyze the effectiveness of selective encryption defenses against reconstruction attacks, enabling scalability to larger deep learning architectures where empirical evaluation is challenging.
    \item Using this theoretical framework, we identify key factors, such as encryption ratio, model complexity, and gradient exposure, that influence the performance of model inversion attacks under selective encryption mechanisms.
    \item We evaluate the accuracy and performance of the proposed framework through both empirical experiments and theoretical analysis, providing comprehensive validation.
\end{enumerate}

The remainder of this paper is organized as follows. Section \ref{Background} provides essential background on federated learning, model inversion attacks, and homomorphic encryption. Section \ref{RelatedWork} reviews existing approaches for mitigating  model inversion attacks, highlighting the current state of the art and identifying key limitations that motivate our work. Section \ref{preliminaries} introduces the workflow of our theoretical analysis of defense mechanisms against model inversion attacks, while Section \ref{intro_BCRLB} presents our proposed mathematical framework, applying the Bayesian Cramér-Rao Lower Bound (BCRLB) to the analysis. Section \ref{derive_bound} derives the theoretical reconstruction lower bound and investigates the factors influencing defense effectiveness. Section \ref{utility_reduction} explores the impact of additive noise on model training in federated learning systems. Section \ref{Experiments} provides comprehensive experimental validation of our theoretical findings, and finally, Section \ref{Conclusion} summarizes our contributions and outlines directions for future research.

\section{Security Vulnerabilities in FL} \label{Background}
In this section, we outline key concepts central to this study: federated learning, attacks targeting FL systems, and the use of homomorphic encryption to protect against these attacks. We begin by introducing FL, followed by a discussion of attacks, with a focus on model inversion attacks, a privacy threat where adversaries attempt to reconstruct sensitive client data from trained models. Finally, we explore homomorphic encryption as a promising approach for safeguarding client privacy in FL environments.

\subsection{Federated Learning} 
Federated learning ~\cite{b6} is a decentralized machine learning approach that allows clients to collaboratively build a global model without revealing their local data. This approach addresses privacy concerns in traditional centralized learning methods by keeping sensitive data localized while still benefiting from diverse datasets across multiple clients. The FL process typically consists of the following iterative steps: 

\begin{enumerate}
  \item \textbf{Client Selection}: The central server selects a subset of clients available to participate in the current training round.
  \item \textbf{Model Distribution}: The server distributes the current global model to the selected clients. This model serves as the starting point for local training in each round.
  \item \textbf{Local Training}: Each selected client trains the received model on their local dataset. This step involves performing one or more epochs of training using techniques such as stochastic gradient descent. 
  \item \textbf{Update Transmission}: Clients send their locally updated model back to the central server. 
  \item \textbf{Aggregation and Model Update}: The server aggregates the received local models and updates the global model with aggregated results.
\end{enumerate}

\subsection{Attacks Targeting FL Systems}
Despite the privacy advantages of federated learning, the transmission of model parameters between the server and clients creates potential vulnerabilities that attackers can exploit, which can be categorized into two types: those that target model security and those that target model privacy ~\cite{b7}. Security attacks include data poisoning, where adversaries alter training data or labels to manipulate the model~\cite{b8,b9,b10}, and free-rider attacks, where participants benefit from the model without contributing~\cite{b11}. Privacy attacks aim to extract sensitive information from the model, such as model inversion attack~\cite{b1,b2,b3}, where adversaries reconstruct local data from model updates. Inference attacks, such as membership or category inference, can also reveal private information, emphasizing the need for stronger privacy protections in FL systems~\cite{b12, b13}.

\begin{figure}[H]
    \centering
    \includegraphics[width=\linewidth]{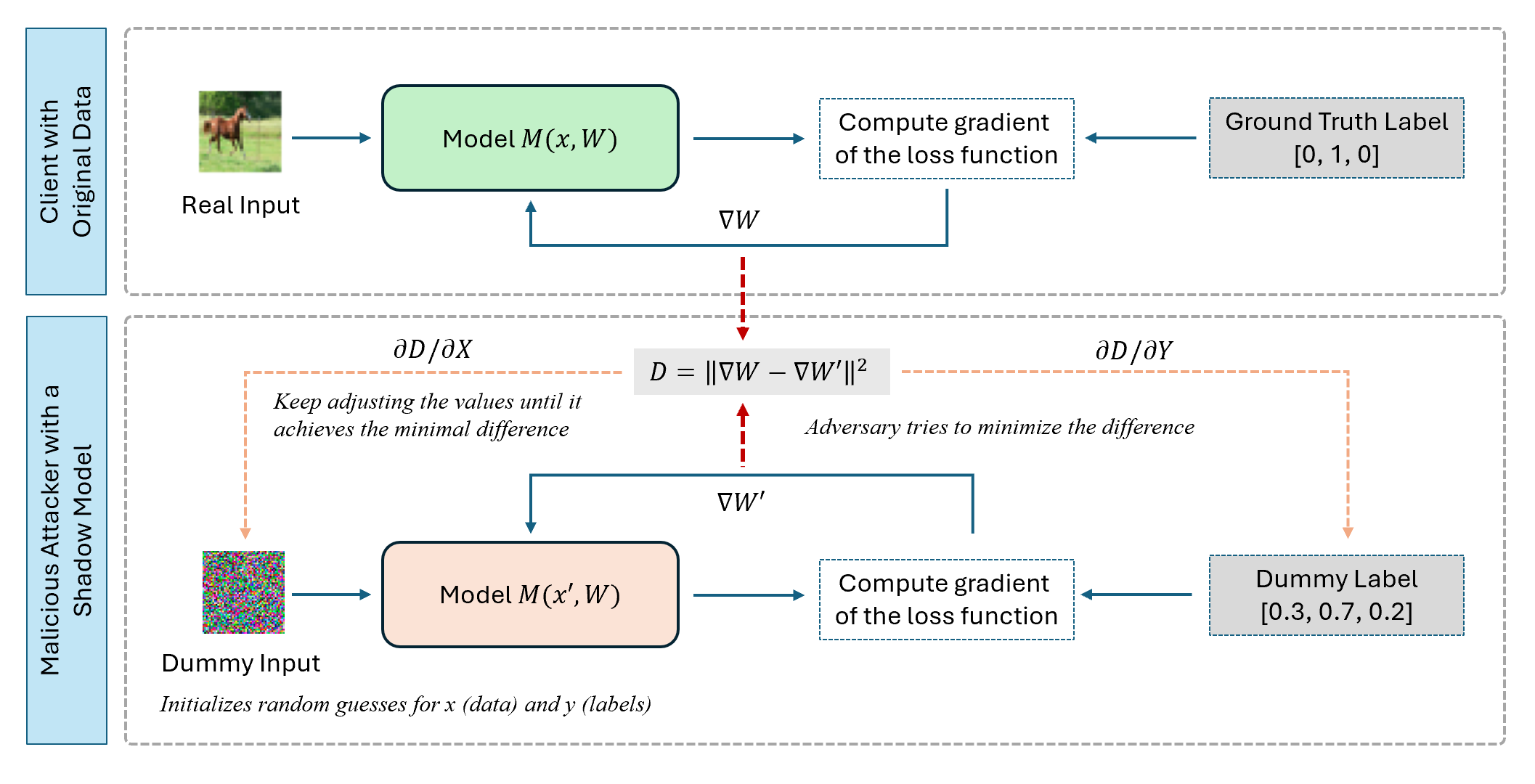} 
    \caption{llustration of a DLG attack, where an adversary reconstructs sensitive client data from shared gradients.}
    \label{fig:DLG_attack}
\end{figure}

Among these threats, the model inversion attack, also known as Deep Leakage from Gradients (DLG)~\cite{b1}, poses a significant risk to federated learning systems. In this attack, a malicious server aims to reconstruct client data by exploiting gradient information from the aggregated or local model updates. As shown in Fig. \ref{fig:DLG_attack}, the server trains a secondary model that iteratively adjusts dummy data to match the extracted gradients using gradient descent. By minimizing the difference between the dummy gradients and the target gradients, the server can approximate the original client data, effectively breaching the privacy protections of FL. Recent research has further refined these attack techniques, demonstrating their evolving sophistication~\cite{b2, b3}. In the original DLG attack, the authors proposed using Euclidean distance as the loss function to measure the similarity between reconstructed and target gradients, proving the feasibility of reconstructing training data in simpler network architectures. Building upon this, Geiping et al.~\cite{b3} introduced an improved metric based on cosine similarity, which considers both the magnitude and direction of gradient changes. This refinement enabled more effective reconstruction of images in deeper neural network architectures, highlighting the growing capabilities of such attacks.

In summary, these studies highlight significant privacy vulnerabilities in standard federated learning systems, emphasizing the urgent need for stronger privacy-preserving mechanisms. The demonstrated ability to reconstruct individual training samples from aggregated updates challenges the core privacy assumptions of FL, underscoring the importance of developing more robust defenses against such attacks.

\begin{figure}[H]
    \centering
    \includegraphics[width=\linewidth]{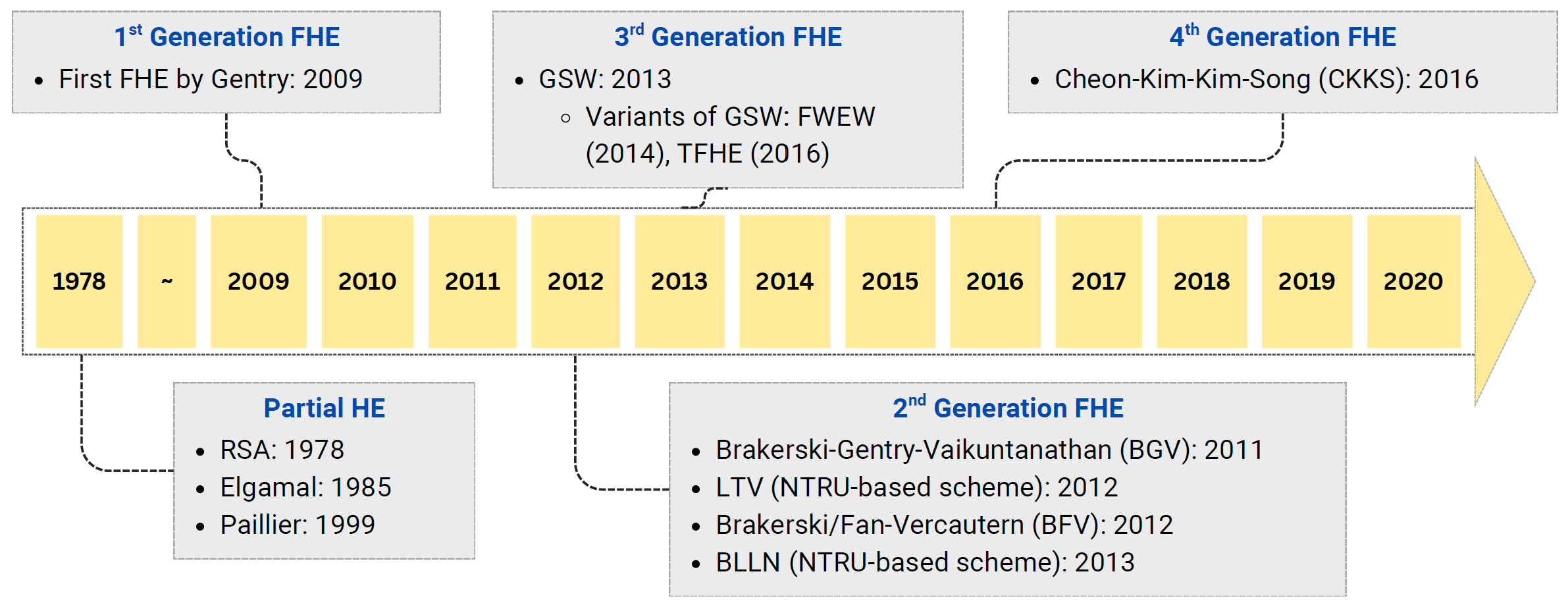} 
    \caption{Four generations in the evolution of homomorphic encryption schemes.}
    \label{fig:HE_schems}
\end{figure}

\subsection{Enhancing Privacy and Security with Homomorphic Encryption}
\label{Enhancing Privacy and Security in FL Systems through Homomorphic Encryption}

Despite the associated computational and communication overhead, homomorphic encryption has emerged as a promising solution for mitigating privacy and security threats in FL systems. Although not originally designed with deep learning algorithms in mind, HE enables computations to be performed directly on encrypted data (ciphertexts), eliminating the need for decryption during intermediate computational stages. This capability allows seamless integration into FL environments, where clients can securely transmit encrypted local model updates to a central server. The server, in turn, aggregates these ciphertexts without ever accessing the underlying plaintext data, thereby significantly reducing the risk of sensitive information leakage to potentially malicious servers or adversaries.

Formally, an HE scheme consists of four fundamental algorithms:

\begin{itemize}
    \item \(\textit{HE.KeyGen}(\lambda) \rightarrow (\textit{pk}, \textit{sk})\): Generates a public key \(\textit{pk}\) and a secret key \(\textit{sk}\), given a security parameter \(\lambda\).
    \item \(\textit{HE.Enc}(\textit{pk}, m) \rightarrow c\): Encrypts a plaintext message \(m\) into ciphertext \(c\) using the public key \(\textit{pk}\).
    \item \(\textit{HE.Eval}(f, c_1, c_2, \dots, c_n) \rightarrow c'\): Computes a function \(f\) directly on ciphertexts \(\{c_i\}\), producing ciphertext \(c'\), equivalent to encrypting \(f\) applied to the plaintexts.
    \item \(\textit{HE.Dec}(\textit{sk}, c') \rightarrow m\): Decrypts ciphertext \(c'\) back into the original plaintext \(m\) using the secret key \(\textit{sk}\).
\end{itemize}

Over the past few decades, homomorphic encryption schemes have advanced significantly, with many now designed to offer protection even against quantum threats. The following sub section provides a brief categorization of these schemes based on their supported arithmetic operations and their suitability for practical deployments.

\subsubsection{Partially Homomorphic Encryption (PHE)}
PHE schemes generally permit unlimited execution of only a single arithmetic operation (either addition or multiplication) on encrypted data. For instance, the Paillier cryptosystem supports unlimited additive operations, making it suitable for cumulative computations common in privacy-preserving machine learning~\cite{b14,b15,b16}. However, the Paillier cryptosystem's security relies on the Decisional Composite Residuosity (DCR) problem, a factoring-based assumption~\cite{b17}. Due to Shor's algorithm, Paillier is vulnerable to quantum attack~\cite{b18}.  

ElGamal is another PHE scheme that inherently supports multiplicative operations~\cite{b19}. To adapt ElGamal for privacy-preserving federated learning with additive operations, several variants have been proposed. In their work, Chen et al. presented~\cite{b20} a lightweight approach utilizing multi-key EC-ElGamal cryptosystem, which generates shorter keys and ciphertexts, making it particularly suitable for resource-constrained IoT environments. However, since ElGamal's security relies on the hardness of the discrete logarithm problem, it remains vulnerable to quantum attacks, as demonstrated by Shor's algorithm~\cite{b21, b22}.

\subsubsection{Fully Homomorphic Encryption (FHE)}
In contrast, FHE schemes support an unlimited number of additions and multiplications on ciphertexts, greatly extending their applicability to complex computational tasks like machine learning. The prominent FHE schemes include BFV~\cite{b23, b24}, BGV~\cite{b25}, TFHE~\cite{b26}, and CKKS~\cite{b27}. Each scheme offers distinct advantages tailored to specific computational contexts:

\begin{itemize}
    \item \textbf{BFV and BGV} are optimized for integer arithmetic, particularly suited for integer-based computations. However, machine learning commonly involves floating-point numbers, requiring additional preprocessing steps like quantization, rescaling, or rounding to adapt model parameters into integers compatible with these schemes. For example, previous research~\cite{b28} employs quantization, and~\cite{b29} uses rescaling and rounding methods to address this compatibility issue.

    \item \textbf{TFHE} specializes in efficient Boolean and bitwise computations. When applying TFHE to privacy-preserving machine learning, model parameters must be converted into binary representations. Recent studies, such as~\cite{b30}, address this by quantizing model weights to integers, subsequently encoding these integers into binary formats suitable for TFHE operations.

    \item \textbf{CKKS} uniquely handles real and complex numbers, making it particularly advantageous for machine learning models utilizing floating-point arithmetic, such as deep neural networks. CKKS achieves performance gains through approximate computations, tolerating minor numerical inaccuracies for significantly improved computational efficiency. Moreover, CKKS supports encoding multiple values into a single ciphertext, facilitating effective batch processing—essential for federated learning scenarios.
\end{itemize}

Modern FHE schemes (BFV, BGV, TFHE, and CKKS) are based on the Ring Learning With Errors (RLWE) problem~\cite{b23, b24, b25, b26, b27}, a mathematical assumption associated with high-dimensional lattices. RLWE is currently considered quantum resilient, providing stronger security guarantees against quantum computing threats compared to factoring-based schemes like Paillier. Recent literature, including Kara et al.~\cite{b31}, reinforces RLWE's quantum resilience, underscoring its suitability for robust and future-proof cryptographic applications in privacy-sensitive domains such as federated learning.

In this work, we address model inversion attacks, specifically DLG attack. We leverage a CKKS-based homomorphic encryption scheme with a selective parameter encryption technique to enhance privacy protection while mitigating the computational overhead introduced by homomorphic encryption. Our contribution includes developing a theoretical framework to analyze the effectiveness of selective encryption defenses and identify key factors that influence their performance. To the best of our knowledge, this is the first work to provide a theoretical analysis of the effectiveness of selective encryption in defending against model inversion attacks.

\section{Background and Related Work} \label{RelatedWork}
This section starts with reviewing HE–based techniques for protecting client privacy in federated learning, with emphasis on defenses against gradient inversion and data reconstruction attacks. We then survey the theoretical results on reconstruction quality in these attacks. We conclude by detailing the threat model that underpins our experiments and analysis. 

\subsection{Defending Reconstruction Attacks with HE-Based Selective Encryption} \label{HE-based Defense Mechanisms}
While Homomorphic Encryption has shown effectiveness in protecting data privacy, its considerable computational overhead limits its practical deployment in real-world scenarios. To address this, recent studies propose selective encryption strategies that encrypt only sensitive or critical model parameters using the CKKS encryption scheme~\cite{b4,b5,b32}. These selective approaches aim to strike a balance between security and computational efficiency by identifying and encrypting the most privacy-sensitive model parameters. For instance, some of the existing approaches identify sensitive model parameters based on second-order gradient derivatives or the magnitude of loss gradients. Specifically, FedML-HE~\cite{b4} employs second-order derivatives by assessing how the loss gradient with respect to model parameters varies in relation to the ground truth labels. Another recent method, referred to as Magnitude-based encryption~\cite{b5}, selects critical parameters based purely on the magnitude of the gradients, encrypting parameters with higher sensitivity scores.

In a related context but with a different approach, the MaskCrypt~\cite{b32} introduces a gradient-guided masking algorithm, allowing each client to dynamically determine which parameters are crucial and should be encrypted during training. Although selective encryption methods have demonstrated improved efficiency, a major limitation persists: existing studies are almost entirely empirical, with little to no rigorous theoretical analysis to substantiate their privacy guarantees and effectiveness.

\begin{figure*}[ht]
\centering
\centerline{\includegraphics[scale=0.4]{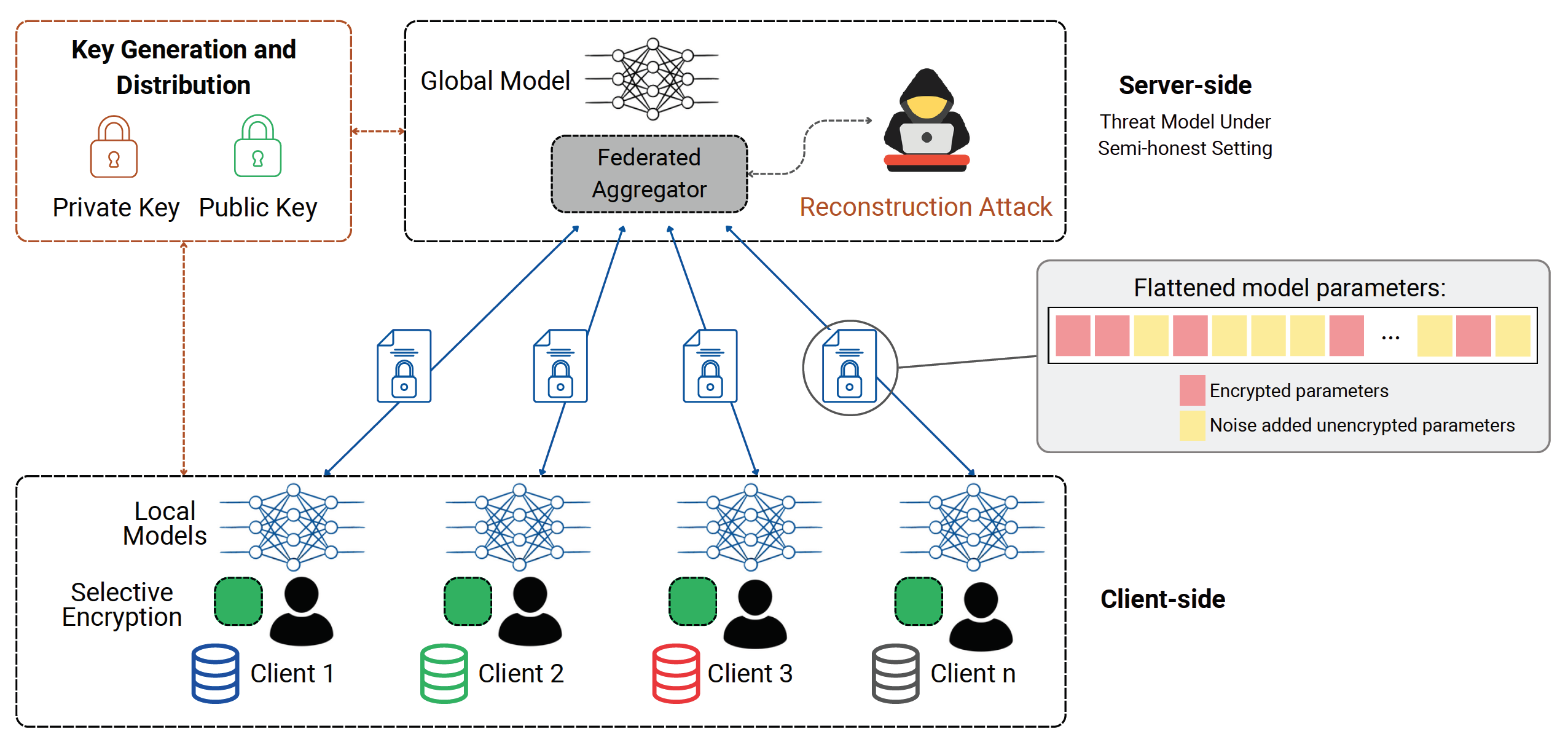}}
\caption{System workflow of the HE-based selective encryption framework. The attacker only gains access to unencrypted parameters, while the framework aims to balance security and utility. }
\vspace*{-5pt}
\label{fig:flowchart}
\end{figure*}

\subsection{Bounding Reconstruction Error in Gradient Leakage Attacks}
Model inversion attacks attempt to reverse-engineer training data by exploiting access to model updates or gradients. To theoretically assess the robustness of privacy-preserving methods against such attacks, different approaches have been introduced under a variety of assumptions to analyze the data reconstruction error. \cite{b36} showed that for a 
two-layer network with $m$ hidden nodes and input dimension $d$, if gradients are 
perturbed with Gaussian noise $\epsilon \sim \mathcal{N}(0,\sigma^{2}I_{md+m})$, 
then with high probability the reconstruction risk satisfies
$
R_{L} \;\;\geq\;\; \widetilde{\Omega}\!\left(\tfrac{\sigma d}{m}\right).
$ This establishes that higher 
noise level $\sigma$ and higher input dimension $d$ makes recovery harder, while 
wider networks ($m$ large) make recovery easier. Looking from an approximation perspective, this result provided a fundamental 
information-theoretic lower bound on data identifiability from noisy gradients. The paper also further analyzed the effect of pruning-based 
defenses (which includes defenses like homomorphic encryption that make true gradient values inaccessible). Again assuming the same two-layer network coupled with a gradient pruning defense with pruning ratio $p$, they showed that with high probability, the 
statistical reconstruction risk is lower bounded by
$
\widetilde{\Omega}\!\left(\tfrac{\sigma d}{(1-p)m}\right)
$. This 
result highlighted that pruning reduces the number of gradient coordinates 
available to an attacker from $m$ to $(1-p)m$, thereby amplifying the 
privacy guarantee by raising the fundamental lower bound on reconstruction 
error. Continuing with the same line of thinking, Chen et al.~\cite{b34} showed that the reconstruction error is
lower bounded by a quantity proportional to
$d^{2}\!/\!\big[\mathbb{E}_{x\sim \mathcal{X}}\!\operatorname{tr}(J_{F}(x)) + d\,\lambda_{1}(J_{P})\big]$.
Here $d$ is the input dimension; $J_{F}(x) := \mathbb{E}_{y\sim S(g(x))}\!\big(\nabla_{x}\log p_{S(x)}(y)\,
\nabla_{x}\log p_{S(x)}(y)^{\top}\big)$ is a Fisher-information–like matrix for the defended
gradient channel $S$ applied to the model output $g(x)$ (with $p_{S(x)}(y)$ the defense-induced
observation density); $J_{P} := \mathbb{E}_{x\sim \mathcal{X}}\!\big(\nabla_{x}\log p_{\mathcal{X}}(x)\,\nabla_{x}\log p_{\mathcal{X}}(x)^{\top}\big)$
captures population-level sensitivity of $\mathcal{X}$ (with $p_{\mathcal{X}}$ the data density); $\operatorname{tr}(\cdot)$
is the trace operator and $\lambda_{1}(\cdot)$ the largest eigenvalue. Intuitively, larger $d$ increases
the bound (stronger privacy), whereas more informative defended gradients (large $\operatorname{tr}(J_{F})$)
or stronger population structure (large $\lambda_{1}(J_{P})$) shrink it, indicating weaker privacy.
This links defense design (noise, clipping shaping $p_{S(x)}$) to optimization-time leakage dynamics through explicit information terms. Developing on this result from ~\cite{b34}, in the current research we quantify the relation of reconstruction error with model complexity and encryption ratio, thereby providing insight into model specific risk behaviors.

\subsection{Threat Model for Theoretical Analysis} 
We consider a semi-honest (honest-but-curious) server that follows the federated learning protocol for aggregation and distribution but actively attempts to infer client's private information from model updates. The server can store all client updates separately and analyze them individually or collectively across training rounds. We assume that the server cannot modify the model architecture to facilitate attacks, ensuring that the selective encryption mechanism remains effective against the model inversion attack without altering the core FL framework.


\section{Preliminaries and Workflow of the Proposed Framework} \label{preliminaries}
This section defines essential operators and outlines the workflow underlying our theoretical analysis concerning the lower bound on data reconstruction error.

\subsection{Notations and Definitions}

For the considered gradient leakage attack, we define the notation explicitly as follows:
Note: $m$: data point dimension; $D$: dimension of model (gradients); $d$: dimension of unencrypted gradients
\begin{itemize}
    \item \(x \sim \mathcal{X} \in \mathbb{R}^m\): Original data sample.
    \item $\theta \in \mathbb{R}^D$: Model parameters.
    \item $L(x) \in \mathbb{R}$: Loss function at sample $x$.
    \item \(g(x) = \nabla_\theta L(x) \in \mathbb{R}^D\): Vector of gradients with respect to model parameters.
    \item \(Rg(x) \in \mathbb{R}^d\): Reduced gradient vector obtained by removing encrypted gradient entries. Note that \( d \leq D \). Here $R:\mathbb{R}^D \ \to \mathbb{R}^d$ is a restriction operator.
    \item \(u = Rg(x) + \delta \in \mathbb{R}^d\): Unencrypted gradient entries with additive noise:
    \[
        \delta \sim \mathcal{N}(0, \sigma^2 I).
    \]
    \item \(y = Pu \in \mathbb{R}^D\): The final gradient vector observed by the attacker consists of noisy unencrypted gradients and zeros at encrypted positions, effectively reconstructing a $D$-dimensional vector where encrypted components are replaced with zeros.Here $P: \mathbb{R}^d \to \mathbb{R}^D$ is a prolongation operator.
    \item \(A(y) \in \mathbb{R}^m\): Reconstruction algorithm attempting to recover the original data \( x \) from observed noisy and partially-encrypted gradient vector \( y \).
\end{itemize}

It should be noted that both operators \(R\) and \(P\) can be explicitly represented as binary matrices composed solely of \(0\)'s and \(1\)'s.

\subsection{Workflow of the Selective Encryption Mechanism}
To defend against model inversion attacks that attempt to reconstruct data from gradients, we propose a workflow that combines selective gradient encryption with Gaussian noise. With the established notations, our entire workflow can be represented as:
\begin{equation*}
    \small
    x \rightarrow g(x) \rightarrow Rg(x) \rightarrow u = Rg(x) + \delta \rightarrow y = Pu \rightarrow A(y).
\label{eq:workflow}
\end{equation*}

\begin{enumerate}
    \item The gradient $g(x) \in \mathbb{R}^D$ is computed from the model using training data $x$.
    \item The restriction operator $R$ selects only the unencrypted components, removing encrypted gradients from their corresponding positions in $g(x)$.
    \item Gaussian noise $\delta$ is added to the unencrypted gradients, yielding the noisy vector $u = Rg(x) + \delta$.
    \item The prolongation operator $P$ reconstructs a $D$-dimensional vector by inserting zeros at encrypted positions, restoring the original dimensionality.
\end{enumerate}

The final gradient vector \( y \) observed by the attacker consists of a combination of encrypted gradients and noisy unencrypted gradients, thereby complicating the reconstruction task.

Since the operator $P$ merely inserts zeros without altering the unencrypted gradient information, reconstruction from $y = Pu$ yields identical results to reconstruction from $u$. Therefore, for denoting data reconstruction we use the notation $A(u)$ and $A(y)$ interchangeably in the remainder of the text.


\subsection{Observed distribution under Selective encryption}
Given \( u = Rg(x) + \delta \), where \(\delta \sim \mathcal{N}(0, \sigma^2 I)\), it follows that:
\[
    u \sim \mathcal{N}(Rg(x), \sigma^2 I).
\]

Since \( y = Pu \), we derive the distribution explicitly as:
\begin{equation*}
    y \sim \mathcal{N}(PRg(x), \sigma^2 PP^T).
    \label{eq:y_distribution}
\end{equation*}

From this point onward, we denote this distribution by:
\[
    y \sim S(g(x)),
\]
explicitly representing the selective encryption-based defense mechanism \( S(\cdot) \). Thus, the operator \(S(\cdot)\) encapsulates three distinct actions: 1) it encrypts selected gradients, 2) adds Gaussian noise to the unencrypted gradients, and 3) finally projects the reduced gradients back to the original dimension \(\mathbb{R}^{D}\).

\subsection{Reconstruction Error} \label{theoretical_reconstruction_lower_bound}
Formally, the reconstruction error \(E_A\) can be defined as the minimum expected squared reconstruction error:
\begin{equation}
    E_A = \min_{A:\mathbb{R}^{D}\rightarrow\mathbb{R}^{m}}
    \mathbb{E}_{x\sim X}\mathbb{E}_{y\sim S(g(x))}
    [\|A(y)-x\|_2^2].
    \label{eq:reconstruction_error}
\end{equation}
With that, we now formalize our usage of Bayesian Cramér-Rao Lower Bound (BCRLB)~\cite{b33} in lower bounding $E_A$.

Given a noisy observation $u$, BCRLB provides a lower bound for data ($x$) reconstruction error for any estimator \( A(\cdot) \): 
\begin{equation}
    \mathbb{E}_{x,u}[(A(u)-x)(A(u)-x)^T] \geq J_B^{-1},
    \label{eq:bcrlb_main}
\end{equation}
where \(J_B\) is the Bayesian Fisher information explicitly defined as:
\begin{equation}
    J_B = J_P + J_D,
    \label{Jb}
\end{equation}

with:


\begin{equation}
    J_P = \mathbb{E}_{x \sim \mathcal{X}}\left[\nabla_x \log p_{\mathcal{X}}(x) \nabla_x \log p_{\mathcal{X}
    }(x)^T\right],
    \label{eq:bcrlb_JP}
\end{equation}

\begin{equation}
    J_D = \mathbb{E}_{x \sim D}\left[J_F(x)\right]
    \label{eq:bcrlb_JD}
\end{equation}
where, 
\begin{equation}
    J_F(x) = \mathbb{E}_{u|x}\left[\nabla_x \log p(u|x) \nabla_x \log p(u|x)^T\right].
    \label{eq:bcrlb_JF}
\end{equation}
Here $J_P$ and $J_D$ are prior informed and data informed terms respectively. With $x$ and $y$ as defined before , and assuming an estimator $A(y)$ designed to recover $x$ from the observations, applying the Bayesian Cramér-Rao bound requires satisfying the following regularity conditions \cite{b33}:

\begin{itemize}
    \item \textbf{Support Condition:}  
    The distribution \( \mathcal{X} \) spans either the entirety of \( \mathbb{R}^m \) or an open region within \( \mathbb{R}^m \) with a piecewise smooth boundary.

    \item \textbf{Existence of Derivatives:}  
    All partial derivatives of the joint probability density \( p(x, y) \) with respect to \( x \) exist and are absolutely integrable over the respective domains.

    \item \textbf{Finite Bias Condition:}  
    The estimator bias defined as:
    \[
    B(x) = \int (A(y) - x) p(y | x) \, dy
    \]
    remains finite for every \( x \in \mathbb{R}^m \).

    \item \textbf{Interchangeability of Derivative and Integral:}  
    For all \( x \in \mathbb{R}^m \), it must hold that:
    \[\scriptstyle
        \nabla_x \int p(x, y) [A(y) - x]^T \, dy 
        = 
        \int \nabla_x \left[ p(x, y) [A(y) - x]^T \right] dy
    \]
    
    \item \textbf{Boundary Condition for Estimation Error:}  
    As points \( x_i \) within the support \( \text{supp}(\mathcal{X}) \) approach boundary points \( x \), the product \( B(x_i) p(x_i) \) must approach zero.
\end{itemize}
Assuming these conditions are satisfied for the data spaces and the probability measures over them, we now present our main theoretical results.

\section{Analysis of Gradient Distribution and Reconstruction Error}  \label{intro_BCRLB}
This section builds on the statistical properties of the observed noisy-encrypted gradient vector $y$ and develops a lower bound on the corresponding data reconstruction error. The BCRLB utilizes Fisher Information to quantify the amount of information that an observed random variable carries about an unknown parameter. To apply the BCRLB for establishing a lower bound on model inversion attack performance when clients employ selective encryption, the attacker's observed gradients must be treated as random variables rather than deterministic values.

This probabilistic framework is necessary because Fisher Information is defined as the expected value of the squared score function, which requires a probability distribution over the observed data. Specifically, Fisher Information is computed as $I(\theta) = \mathbb{E}\left[\left(\frac{\partial \log p(y|\theta)}{\partial \theta}\right)^2\right]$, where $p(y|\theta)$ represents the likelihood of observing data $y$ given parameter $\theta$. When gradients are deterministic (fixed values), there is no underlying probability distribution, making the likelihood function undefined and rendering Fisher Information mathematically meaningless.

Therefore, we model the observed gradients as random variables by introducing small random noise from a known distribution to the unencrypted gradients. This transforms them from fixed unknown values into probabilistic quantities with well-defined likelihood functions, enabling the computation of Fisher Information and subsequent application of BCRLB for lower bound analysis. 

Moving beyond mathematical utility, noise addition also has a practical significance form privacy protection point of view. In a nutshell, the additive noise enhances privacy by obfuscating gradient information and limiting attackers' reconstruction capabilities. Even with knowledge of model architecture and gradients, perfect training data recovery becomes infeasible. This approach complements selective encryption, creating a multi-layered defense: encrypted positions provide complete protection while noisy unencrypted positions preserve utility with probabilistic privacy guarantees against model inversion attacks.

\subsection{Reconstruction Error Lower Bound Analysis Using the Fisher Information Matrix} \label{derive_bound}
Building on the work of Chen et al.~\cite{b34}, here we further derive the specific reconstruction error lower bound for the attacker's observed gradients under selective encryption. We begin by formulating the Fisher information matrix $J_F(x)$ as shown in Equation \ref{eq:bcrlb_JF}. Given the normally distributed gradient vector $u \sim \mathcal{N}(Rg(x), \sigma^2 I)$, we obtain:
\begin{equation*}
J_F(x) = \frac{1}{\sigma^2} \nabla_x g(x)^T R^T R\nabla_x g(x) \in \mathbb{R}^{m \times m} 
\end{equation*}
The proof is included as Lemma 2 in appendix. 

Using this Fisher information matrix, we can now establish the reconstruction error lower bound:

\begin{theorem} \label{theo:theorem1}
(Privacy guarantee) With $E_A$ being the mean data reconstruction error in a gradient inversion attack as defined in (\ref{eq:reconstruction_error}), we have the following respective lower bound:
\begin{equation} \label{Equation_6}
    \small
        E_A \geq 
        \frac{m}{\frac{D(1 - z)}{\sigma^2} \mathbb{E}_{x \sim \mathcal{X}} \| R \nabla_x g(x) \|^2_{\max} + \lambda_1(J_P)},
\end{equation}
where $m$ is the data dimension, $D$ is the number of model parameters, $z$ is the model encryption ratio, $\sigma^2$ is the variance for the additive noise defense, $\lambda_1(J_P)$ is the largest eigenvalue of the prior Fisher information matrix shown in (\ref{eq:bcrlb_JP}), and $\| R \nabla_x g(x) \|^2_{\max} = (\max_{i,j}{|R \nabla_xg(x)|})^2$ quantifies the maximum gradient exposure defined as squared maximum sensitivity of an unencrypted gradient coordinate $g(x)_j$ with respect to a data feature $x_i$.
\end{theorem}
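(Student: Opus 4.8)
The plan is to chain four elementary reductions, each peeling off one ``layer'' of the Bayesian Fisher information, and then to collapse the Jacobian term onto the scalar exposure quantity. First I would invoke the matrix BCRLB~(\ref{eq:bcrlb_main}): for every estimator $A$ we have $\mathbb{E}_{x,y}[(A(y)-x)(A(y)-x)^{\top}] \succeq J_B^{-1}$, and since the trace is monotone with respect to the Loewner order this gives $\mathbb{E}[\|A(y)-x\|_2^2] \ge \operatorname{tr}(J_B^{-1})$; as the inequality is uniform in $A$, the minimizing estimator in~(\ref{eq:reconstruction_error}) also satisfies $E_A \ge \operatorname{tr}(J_B^{-1})$. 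The $m\times m$ matrix $J_B^{-1}$ has eigenvalues $1/\mu_i$, where the $\mu_i$ are the eigenvalues of $J_B$, and each of these is at least $1/\lambda_1(J_B)$, so $\operatorname{tr}(J_B^{-1}) \ge m/\lambda_1(J_B)$. This already exhibits the shape of~(\ref{Equation_6}), so all that remains is to upper-bound $\lambda_1(J_B)$ by the displayed denominator.

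Second, writing $J_B = J_P + J_D$ as in~(\ref{Jb}) and applying Weyl's inequality (subadditivity of the top eigenvalue of symmetric matrices), $\lambda_1(J_B) \le \lambda_1(J_P) + \lambda_1(J_D)$. Third, because $\lambda_1(\cdot)$ is convex on symmetric matrices --- it is the pointwise supremum of the linear functionals $M \mapsto v^{\top} M v$ over unit vectors $v$ --- Jensen's inequality applied to $J_D = \mathbb{E}_{x}[J_F(x)]$ yields $\lambda_1(J_D) \le \mathbb{E}_{x}[\lambda_1(J_F(x))]$. Fourth, I would substitute the closed form $J_F(x) = \sigma^{-2}\,(R\nabla_x g(x))^{\top}(R\nabla_x g(x))$ obtained via Lemma~2, so that $\lambda_1(J_F(x)) = \sigma^{-2}\|R\nabla_x g(x)\|_2^2$, and then bound the spectral norm of the restricted Jacobian by its maximal absolute entry times the appropriate dimension count, namely the number of unencrypted coordinates $d = D(1-z)$. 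Feeding this back through the chain and taking $\mathbb{E}_x$ produces exactly~(\ref{Equation_6}).

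The step I expect to be the main obstacle is this last one: getting from $\|R\nabla_x g(x)\|_2^2$ to $D(1-z)\,\|R\nabla_x g(x)\|_{\max}^2$ with the right constant. The crude route $\|M\|_2 \le \|M\|_F$ followed by $\|M\|_F^2 \le (\text{rows})\cdot(\text{cols})\cdot \max_{i,j} M_{ij}^2$ carries an extra factor equal to the data dimension $m$, so to land precisely on the stated bound one must either use a sharper spectral-norm estimate (for instance $\|M\|_2^2 \le \|M\|_1\,\|M\|_{\infty}$ together with the fact that $R$ is a $0/1$ selection matrix) or read $\|R\nabla_x g(x)\|_{\max}^2$ as the per-coordinate squared sensitivity $\max_j \sum_i (\partial g(x)_j/\partial x_i)^2$; either way, the bookkeeping of which dimension multiplies the maximum must be handled with care, and this is the only genuinely non-mechanical point. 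A secondary caveat is that the entire derivation presupposes the five regularity conditions listed for the BCRLB; I would take these as given for the data distributions at hand, exactly as assumed in the preliminaries, rather than re-establishing them.
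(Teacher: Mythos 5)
Your chain $E_A \ge \operatorname{tr}(J_B^{-1}) \ge m/\lambda_1(J_B)$, followed by Weyl's inequality and Jensen applied to $\lambda_1(\cdot)$, is internally sound, but it is a genuinely different decomposition from the paper's and it cannot reach the stated constant. The paper instead uses the scalar Cauchy--Schwarz (AM--HM) step $\operatorname{tr}(J_B^{-1}) \ge m^2/\operatorname{tr}(J_B)$ together with $\operatorname{tr}(J_B) = \mathbb{E}_x[\operatorname{tr}(J_F(x))] + \operatorname{tr}(J_P) \le \mathbb{E}_x[\operatorname{tr}(J_F(x))] + m\,\lambda_1(J_P)$, so the quantity that must be bounded entrywise is the \emph{trace} $\operatorname{tr}(J_F(x)) = \sigma^{-2}\|R\nabla_x g(x)\|_F^2$, a sum of exactly $md$ squared entries, which is at most $\frac{md}{\sigma^2}\|R\nabla_x g(x)\|_{\max}^2$; the extra $m$ is then absorbed by the $m^2$ in the numerator. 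Your route forces you to bound the \emph{spectral} norm $\|R\nabla_x g(x)\|_2^2$ by $d\,\|R\nabla_x g(x)\|_{\max}^2$, and with the theorem's definition $\|M\|_{\max} = \max_{i,j}|M_{ij}|$ this inequality is false in general: for the all-ones $d\times m$ matrix, $\|M\|_2^2 = dm$ while $d\,\|M\|_{\max}^2 = d$. Neither of your proposed repairs closes the gap: $\|M\|_2^2 \le \|M\|_1\|M\|_\infty$ still yields $dm\,\|M\|_{\max}^2$ for a $d\times m$ matrix, and reinterpreting $\|\cdot\|_{\max}^2$ as $\max_j\sum_i M_{ij}^2$ proves a different statement from the one in the theorem. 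The best you can salvage along your path is $E_A \ge m\big/\big(\tfrac{dm}{\sigma^2}\mathbb{E}_x\|R\nabla_x g(x)\|_{\max}^2 + \lambda_1(J_P)\big)$, which is weaker than (\ref{Equation_6}) by a factor of $m$ in the dominant denominator term.

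The fix is to switch from eigenvalue bookkeeping to trace bookkeeping at the very first step, as the paper does following Chen et al.~\cite{b34}: since $\tfrac{1}{m}\operatorname{tr}(J_F) \le \lambda_1(J_F)$, the trace-based denominator is never larger than your eigenvalue-based one, so the paper's bound is both tighter and the one that actually produces the factor $D(1-z)$ in the statement. Your remaining steps --- the reduction $E_A\ge\operatorname{tr}(J_B^{-1})$, the use of the appendix lemma for the closed form of $J_F$, the substitution $d = D(1-z)$, and deferring the BCRLB regularity conditions to the preliminaries --- all agree with the paper.
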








Detailed derivations for Theorem~\ref{theo:theorem1} are provided explicitly in Appendix ~\ref{derive_reconstruction_lower_bound}. For a typical FL setup, Theorem \ref{theo:theorem1} has a variety of implications regarding the underlying AI model complexity, encryption hyperparameters etc., that we discuss in the following subsection.

\subsection{Analysis of Factors Influencing the Reconstruction Error Bound}
The derived reconstruction error bound (\ref{Equation_6}) is influenced by several key factors that shape the effectiveness of our defensive encryption strategy. Specifically, it is governed by the encryption ratio $z$, which represents the proportion of encrypted parameters, the total number of model parameters $D$, and the expected maximum gradient exposure $\mathbb{E}_{x \sim X} \left[ \left\| R \nabla_x g(x) \right\|_{\max}^2 \right]$. This section examines how these parameters affect the reconstruction error bound and discusses their broader security implications.

\subsubsection{Impact of Encryption Ratio}
When the encryption ratio $z$ increases, the denominator term $D(1-z)$ decreases, resulting in a higher reconstruction error lower bound:

\begin{equation*}
z \uparrow \implies D(1-z) \downarrow \implies E_A\text{ lower bound} \uparrow
\end{equation*}

This mathematical relationship confirms that increasing the proportion of encrypted parameters strengthens privacy protection by making data reconstruction more difficult for potential attackers.

\subsubsection{Effect of Model Complexity}
For a fixed encryption ratio $z$, reducing the total parameter count $D$ leads to an increased reconstruction error bound:

\begin{equation*}
D \downarrow \implies E_A\text{ lower bound} \uparrow
\end{equation*}

This indicates that smaller models with fewer parameters lead to a higher reconstruction error bound, making it harder for the attacker to reconstruct the data. This is because of decreased number of gradient coordinates to extract information from.

\subsubsection{Gradient exposure Considerations}
The expected maximum gradient exposure $\mathbb{E}_{x \sim X}[\|R\nabla_x g(x)\|^2_{\max}]$ represents how strongly unencrypted gradients are influenced by input data. A lower sensitivity leads to a higher reconstruction error bound, thereby strengthening privacy guarantees.

\begin{equation*}
\mathbb{E}_{x \sim X}[\|R\nabla_x g(x)\|^2_{\max}] \downarrow \implies E_A\text{ lower bound} \uparrow
\end{equation*}

In other words, this relationship shows that by selectively encrypting the most sensitive gradient components (i.e., those with the largest magnitudes) and leaving the less sensitive ones unencrypted, the reconstruction error bound can be substantially increased. 

\section{Effect of Selective HE Combined with Noise on Model Utility} \label{utility_reduction}

Since our proposed approach and analysis introduces small amounts of noise into the unencrypted gradients, in this section, we examine its effect on the training loss (and the final utility) over a single gradient descent step in standard FL training. This analsysis essentially points to the classic privacy-utility trade-off that becomes relevant when trying to encrypt or polute shared gradients for increased privacy.

\subsection{Performance Implications of Selective Noise Addition to Unencrypted Gradients}
Although higher variance ($\sigma^2$) of the additive noise increases the lower bound on reconstruction error as established by Theorem \ref{theo:theorem1}, thereby directly improving client privacy, too much noise in the locally computed gradient can hamper the training process till the point of losing utlity. To understand this trade-off better, in this section we analyze the reduction in local loss $L_i$ of a general client $i$ with one gradient descent step. The motivation behind this analysis is to establish (approximate) conditions under which the local model can diverge. In this regard we have the following result on the noise variance.
\begin{theorem}
\label{Theorem 2}
(Privacy-utility trade-off) Under the first order Taylor approximation of loss $L_i(x, \theta) \in \mathbb{R}$ for client $i$ at $\theta$ (model parameters) along the local gradient descent step $\theta^+=\theta-\eta Q(x)$, the expected loss reduces ($\mathbb{E}_x\!\big[L_i(x,\theta)-L_i(x,\theta^{+})\big] \geq 0$) with probability at least $1-\delta$ $($for any $\delta\in(0,1))$, if for all clients, the variance ($\sigma^2$) for the noise added before last aggregation was upper bounded as:
\begin{equation} \label{Equation_critical_noise}
\sigma \;\leq \; \sigma_{\mathrm{crit}}
:= \frac{B_i}{\sqrt{n}\,\|\mu_i\|\,\big(\sqrt{d}+\sqrt{2\ln(1/\delta)}\big)}.
\end{equation} 
Here
\begin{itemize}
    \item $n$, $d$: Number of clients and number of noisy gradient coordinates respectively.
    \item $Q(x)=G(x)+P\!\sum_{i=1}^n \epsilon_i$ is the aggregated, partially noisy gradient returned from the server to the clients, with $\epsilon_i \in \sim N(0,\sigma^2 I_d)$, $I_d \in \mathbb{R}^{d \times d} $.
    \item $G(x) = \sum_{k=1}^n g_k(x)$ represents the clean aggregated gradients from clients. For client $i$: $g_i(x)=\nabla_{\theta} L_i(x,\theta)$. 
    \item $R \in \mathbb{R}^{d \times D}$ is the restriction operator sampling $d$ unencrypted gradients from a given gradient vector, and $P$ being the corresponding prolongation operator ($P = R^T$).
    \item $\mu_i:=\mathbb{E}_x[\,R\,g_i(x)\,]\in\mathbb{R}^d$ is the restriction of local gradient to coordinates where noise is added.
    \item $B_i:=\mathbb{E}_x[\,g_i(x)^\top G(x)\,] \in \mathbb{R}$ is the correlation measure of the local gradient $g_i(x)$ with the global aggregated gradient $G(x)$. Please note if $B\le 0$, ascensive steps may occur even for $\sigma=0$.
\end{itemize}
\end{theorem}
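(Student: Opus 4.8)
The plan is to linearize client $i$'s loss along its descent step, reduce the claimed ``expected loss decreases'' event to a one-sided inequality on a scalar Gaussian, and then control that Gaussian via a high-probability bound on the norm of the aggregated noise vector. \emph{Step 1 (linearization and reduction).} Apply the first-order Taylor expansion of $L_i(x,\cdot)$ at $\theta$ along $\theta^{+}=\theta-\eta Q(x)$, which gives $L_i(x,\theta)-L_i(x,\theta^{+})\approx \eta\, g_i(x)^{\top}Q(x)$ with $g_i(x)=\nabla_\theta L_i(x,\theta)$. Substitute $Q(x)=G(x)+P\sum_{j=1}^{n}\epsilon_j$, use $P=R^{\top}$ so that $g_i(x)^{\top}P\epsilon_j=(Rg_i(x))^{\top}\epsilon_j$, and take $\mathbb{E}_x$ (with the additive noise independent of the data), obtaining
\[
\mathbb{E}_x\!\big[L_i(x,\theta)-L_i(x,\theta^{+})\big]\;\approx\;\eta\big(B_i+\mu_i^{\top}W\big),\qquad W:=\sum_{j=1}^{n}\epsilon_j ,
\]
by linearity of expectation and the definitions $B_i=\mathbb{E}_x[g_i(x)^{\top}G(x)]$, $\mu_i=\mathbb{E}_x[Rg_i(x)]$. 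Since $\eta>0$, under this approximation the target event coincides with $\{\,B_i+\mu_i^{\top}W\ge 0\,\}$.

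\emph{Step 2 (noise concentration and conclusion).} The $\epsilon_j$ being i.i.d.\ $\mathcal{N}(0,\sigma^2 I_d)$ forces $W\sim\mathcal{N}(0,n\sigma^2 I_d)$, so $W=\sqrt n\,\sigma\,\tilde Z$ with $\tilde Z\sim\mathcal{N}(0,I_d)$. The Euclidean norm is $1$-Lipschitz, so Gaussian Lipschitz concentration and $\mathbb{E}\|\tilde Z\|\le(\mathbb{E}\|\tilde Z\|^2)^{1/2}=\sqrt d$ give $\Pr\!\big(\|\tilde Z\|\ge\sqrt d+t\big)\le e^{-t^2/2}$ for $t\ge 0$; with $t=\sqrt{2\ln(1/\delta)}$ this means that with probability at least $1-\delta$,
\[
\|W\|\;\le\;\sqrt n\,\sigma\big(\sqrt d+\sqrt{2\ln(1/\delta)}\big).
\]
On that event Cauchy--Schwarz yields $\mu_i^{\top}W\ge-\|\mu_i\|\,\|W\|\ge-\sqrt n\,\sigma\,\|\mu_i\|\big(\sqrt d+\sqrt{2\ln(1/\delta)}\big)$, hence $B_i+\mu_i^{\top}W\ge B_i-\sqrt n\,\sigma\,\|\mu_i\|\big(\sqrt d+\sqrt{2\ln(1/\delta)}\big)$, which is non-negative precisely when $\sigma\le B_i/\big(\sqrt n\,\|\mu_i\|(\sqrt d+\sqrt{2\ln(1/\delta)})\big)=\sigma_{\mathrm{crit}}$. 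Combining with Step 1 proves the theorem; and if $B_i\le 0$ the threshold is non-positive, so descent can fail even at $\sigma=0$, matching the final remark.

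The main obstacle is the honesty of the first-order step: it is exact only as $\eta\to0$, which is why the statement is phrased ``under the first-order Taylor approximation.'' Making it a rigorous inequality would require carrying a second-order (Lipschitz-gradient) remainder of order $\eta^2\|Q(x)\|^2$ and imposing an additional smallness condition on $\eta$, together with a bound on $\mathbb{E}_x\|Q(x)\|^2$ (which itself brings in $\|G(x)\|^2$ and $n\sigma^2 d$). A secondary, softer point: bounding the scalar $\mu_i^{\top}W$ by $\|\mu_i\|\,\|W\|$ is loose, since $\mu_i^{\top}W$ is a one-dimensional Gaussian with standard deviation $\sqrt n\,\sigma\|\mu_i\|$; using its exact $(1-\delta)$-quantile $\Phi^{-1}(1-\delta)$ in place of $\sqrt d+\sqrt{2\ln(1/\delta)}$ gives a strictly sharper (but less closed-form) critical variance, worth noting as an optional refinement.
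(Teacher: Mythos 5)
Your proposal is correct and follows essentially the same route as the paper's proof: first-order Taylor expansion, substitution of $Q(x)=G(x)+P\sum_j\epsilon_j$, the identity $g_i(x)^{\top}P\epsilon_j=(Rg_i(x))^{\top}\epsilon_j$ via $P=R^{\top}$, Cauchy--Schwarz, and a $(1-\delta)$ high-probability bound on $\big\|\sum_{j=1}^n\epsilon_j\big\|$. The only (immaterial) difference is that you derive the norm concentration from Gaussian Lipschitz concentration together with $\mathbb{E}\|\tilde Z\|\le\sqrt d$, whereas the paper's Lemma~\ref{lemma2} obtains the identical bound $\sigma\sqrt{n}\big(\sqrt d+\sqrt{2\ln(1/\delta)}\big)$ from the Laurent--Massart $\chi^2_d$ upper tail.
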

Derivation for Theorem \ref{Theorem 2} is provided in Appendix \ref{derive_privacy_utility}.
\subsection{Interpreting the Noise Threshold in Federated Learning}
The theorem provides a high-probability, one-step descent guarantee for each client’s local loss when the server returns an aggregated gradient that is perturbed only on a subset of coordinates. It identifies a critical noise level
\[
\sigma_{\mathrm{crit}}
=\frac{B_i}{\sqrt{n}\,\|\mu_i\|\,\big(\sqrt{d}+\sqrt{2\ln(1/\delta)}\big)}
\]
such that, with probability at least $1-\delta$ over the injected Gaussian noise (additive Gaussian noise), the first-order predicted local loss of client $i$ decreases whenever $\sigma\le\sigma_{\mathrm{crit}}$.
\begin{enumerate}
\item Privacy--utility trade-off
The bound makes explicit how utility competes with privacy:
\begin{itemize}
    \item \textbf{Signal (alignment)}: $B_i=\mathbb{E}_x\!\big[g_i(x)^\top G(x)\big]$ measures how well the client’s local gradient aligns with the global aggregate. Larger $B_i$ increases the tolerable noise.
    \item \textbf{Where noise is injected}: $\mu_i=\mathbb{E}_x\!\big[R\,g_i(x)\big]$ captures the average magnitude of the local gradient on the \emph{noised} coordinates. The threshold shrinks as $1/\|\mu_i\|$: if noise targets coordinates carrying strong local signal, less noise can be tolerated.
    \item \textbf{Number of clients}: the server \emph{sums} independent noises, so the aggregated noise grows like $\sigma\sqrt{n}$; hence $\sigma_{\mathrm{crit}}$ scales as $1/\sqrt{n}$.
    \item \textbf{Number of noisy coordinates}: the norm of a $d$-dimensional Gaussian scales like $\sqrt{d}$, yielding the $\sqrt{d}$ factor in the denominator and tightening the threshold as more coordinates are noised.
    \item \textbf{Confidence level}: smaller $\delta$ (stronger guarantee) increases $\sqrt{2\ln(1/\delta)}$ and thus reduces $\sigma_{\mathrm{crit}}$.
\end{itemize}
\item Geometric intuition
We have the first order loss reduction estimate,
\begin{align*}
\mathbb{E}_x\!\big[L_i(x,\theta)-L_i(x,\theta^{+})\big]
&\approx
\eta\,\underbrace{\mathbb{E}_x\!\big[g_i(x)^\top G(x)\big]}_{\text{signal }B_i}
\\[2pt]
&\quad+\;\eta\,\underbrace{\mathbb{E}_x\!\big[g_i(x)\big]^{\!\top}\,P\!\sum_{j=1}^n \epsilon_j}_{\text{noise interaction}}
\\[2pt]
&=\;\eta\Big(B_i+\mu_i^{\!\top}\sum_{j=1}^n \epsilon_j\Big).
\end{align*}
with $P=R^\top$ and $\mu_i=\mathbb{E}_x[Rg_i]$. A high-probability norm bound for $\sum_{j=1}^n\epsilon_j\sim\mathcal{N}(0,n\sigma^2I_d)$ (Lemma \ref{lemma2}) controls the worst-case angle between $\mu_i$ and the noise, producing the one-sided inequality and the factors $\|\mu_i\|$ and $\sqrt{d}$.
\item Scope of the probability statement
The probability $1-\delta$ is over the injected Gaussian noises; the expectation $\mathbb{E}_x$ is over client $i$’s local data. Thus the result guarantees a \emph{one-step}, \emph{first-order} expected decrease for a typical draw of noise, not global convergence.
\item Practical guidance
\begin{itemize}
    \item \textbf{Set noise conservatively}: choose $\sigma\le \kappa\,\sigma_{\mathrm{crit}}$ for some safety factor $\kappa\in(0,1)$.
    \item \textbf{Choose coordinates to noise}: prefer coordinates with small $|\mu_i|$ (low expected local gradient), reserving high-signal coordinates for little or no noise.
    \item \textbf{Aggregation rule}: averaging (instead of summing) reduces the aggregate noise scale from $\sigma\sqrt{n}$ to $\sigma/\sqrt{n}$, relaxing the per-client constraint.
    \item \textbf{Monitor \& adapt}: track empirical $\widehat{B}_i$ and $\widehat{\mu}_i$ to adapt $\sigma$ and $d$ across rounds and clients.
\end{itemize}
\item Limitations and assumptions
The guarantee is single-step and first-order; large steps or high curvature can violate the approximation. The analysis assumes zero-mean, isotropic, independent Gaussian noise on the selected coordinates and independence from the data. If $B_i\le 0$ (anti-alignment), ascensive steps may occur even for $\sigma=0$, reflecting data heterogeneity rather than privacy noise.
\end{enumerate}

\section{Experimental Results and Discussion} \label{Experiments}
\subsection{Experimental Setup}
\textbf{Dataset:} For our experiments, we used image classification as the core task for federated learning, employing the CIFAR-100 dataset for all evaluations. This benchmark dataset contains 60,000 32x32 color images across 100 classes, with 500 training images and 100 testing images per class.

\textbf{Models:} We conducted experiments using LeNet (88,648 parameters), MobileNetV3 (1,620,356 parameters), and ResNet-18 (11,227,812 parameters).

\textbf{Settings for Federated Learning:}
Our federated learning system consists of 3 clients followed the settings in~\cite{b4, b5}.

\textbf{Platform:}
All experiments were conducted on a machine equipped with a 12th Gen Intel Core™ i7-12700KF CPU (up to 5 GHz), 64 GB of memory, an NVIDIA RTX 4080 GPU, running on Ubuntu 20.04.

\FloatBarrier
\begin{figure}[h!]
    \centering
    \includegraphics[scale=0.6]{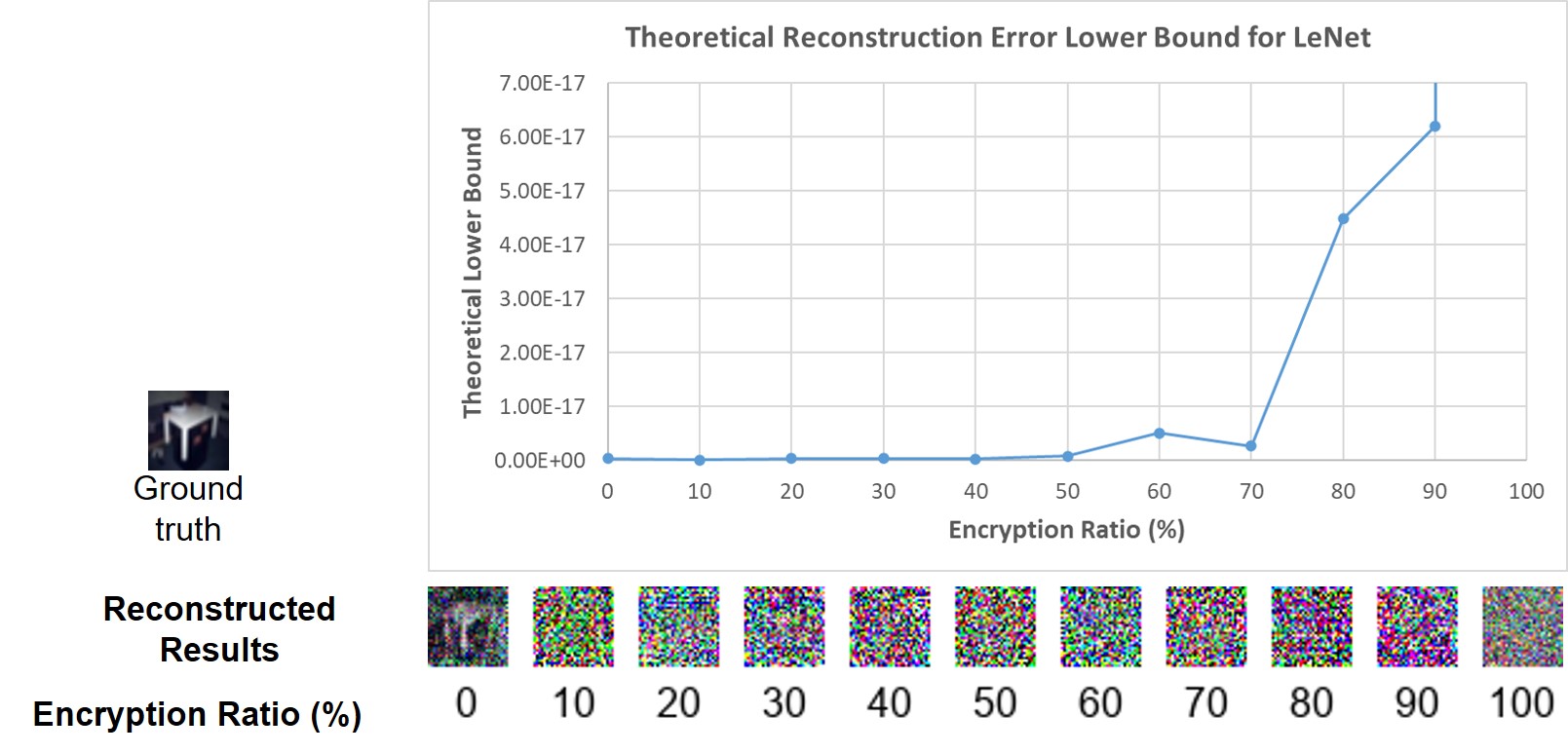}
    \caption{Experimental analysis and theoretical validation of the effect of encryption ratio for the LeNet architecture.}
    \vspace*{-5pt}
    \label{fig:encryption_ratio_effect_lenet}
\end{figure}

\begin{figure}[h!]
    \centering
    \includegraphics[scale=0.6]{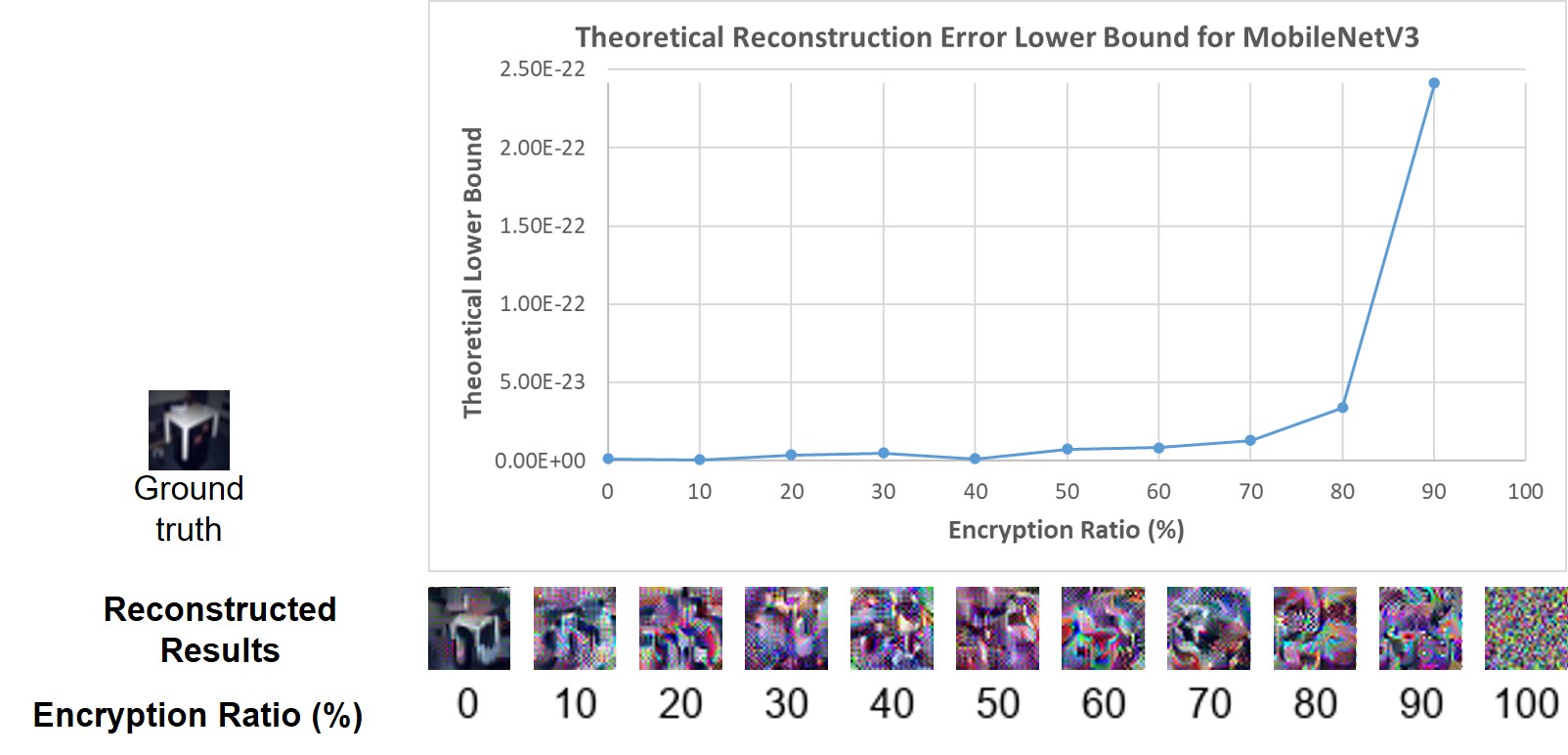}
    \caption{Experimental analysis and theoretical validation of the effect of encryption ratio for the MobilenetV3 architecture.}
    \vspace*{-5pt}
    \label{fig:encryption_ratio_effect_mobilenetv3}
\end{figure}

\begin{figure}[h!]
    \centering
    \includegraphics[scale=0.6]{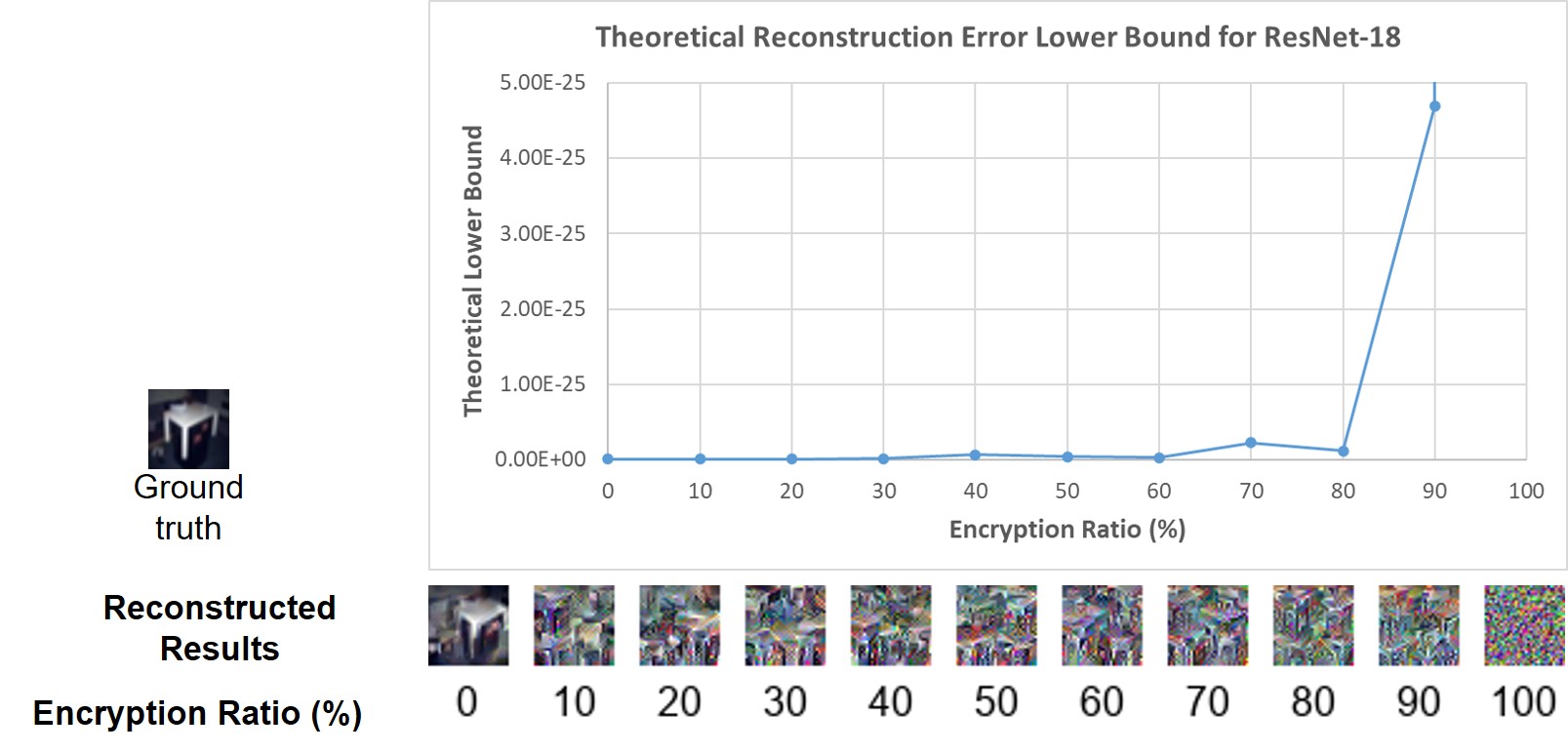}
    \caption{Experimental analysis and theoretical validation of the effect of encryption ratio for the ResNet-18 architecture.}
    \vspace*{-5pt}
    \label{fig:encryption_ratio_effect_resnet18}
\end{figure}
\FloatBarrier

\subsection{Factors Influencing Defense Against Model Inversion Attacks}

In this section, we analyze the key factors that influence the lower bound of the reconstruction error. Recall from Theoream \ref{theo:theorem1} that defines the bounding for reconstruction error,  Equation \ref{Equation_6} , we identified three major factors that govern the reconstruction error bound: the encryption ratio $z$, model complexity $D$, and gradient exposure $\mathbb{E}_{x \sim X} \| R \nabla_x g(x) \|^2_{\max}$. Here, we examine the impact of each factor through both theoretical analysis and empirical validation.

Figures \ref{fig:encryption_ratio_effect_lenet} to \ref{fig:encryption_ratio_effect_resnet18} illustrate the relationship between encryption ratio and reconstruction error for the three models. Theoretical curves show that the reconstruction error increases proportionally with the encryption ratio, a trend confirmed by our empirical results. This demonstrates that higher encryption ratios offer stronger protection against model inversion attacks.

The next factor influencing the reconstruction error is model size. Figure \ref{fig:model_complexity_effect} illustrates the impact of model complexity on the reconstruction error. Our theoretical analysis indicates that LeNet (88,648 parameters) exhibits a higher reconstruction error compared to ResNet-18 (11,227,812 parameters). This inverse relationship between model complexity and reconstruction error is confirmed by empirical results using Mean Squared Error (MSE), suggesting that simpler models may inherently provide stronger privacy guarantees against reconstruction attacks.

\begin{figure}[!htbp]
    \centering
    \includegraphics[scale=0.6]{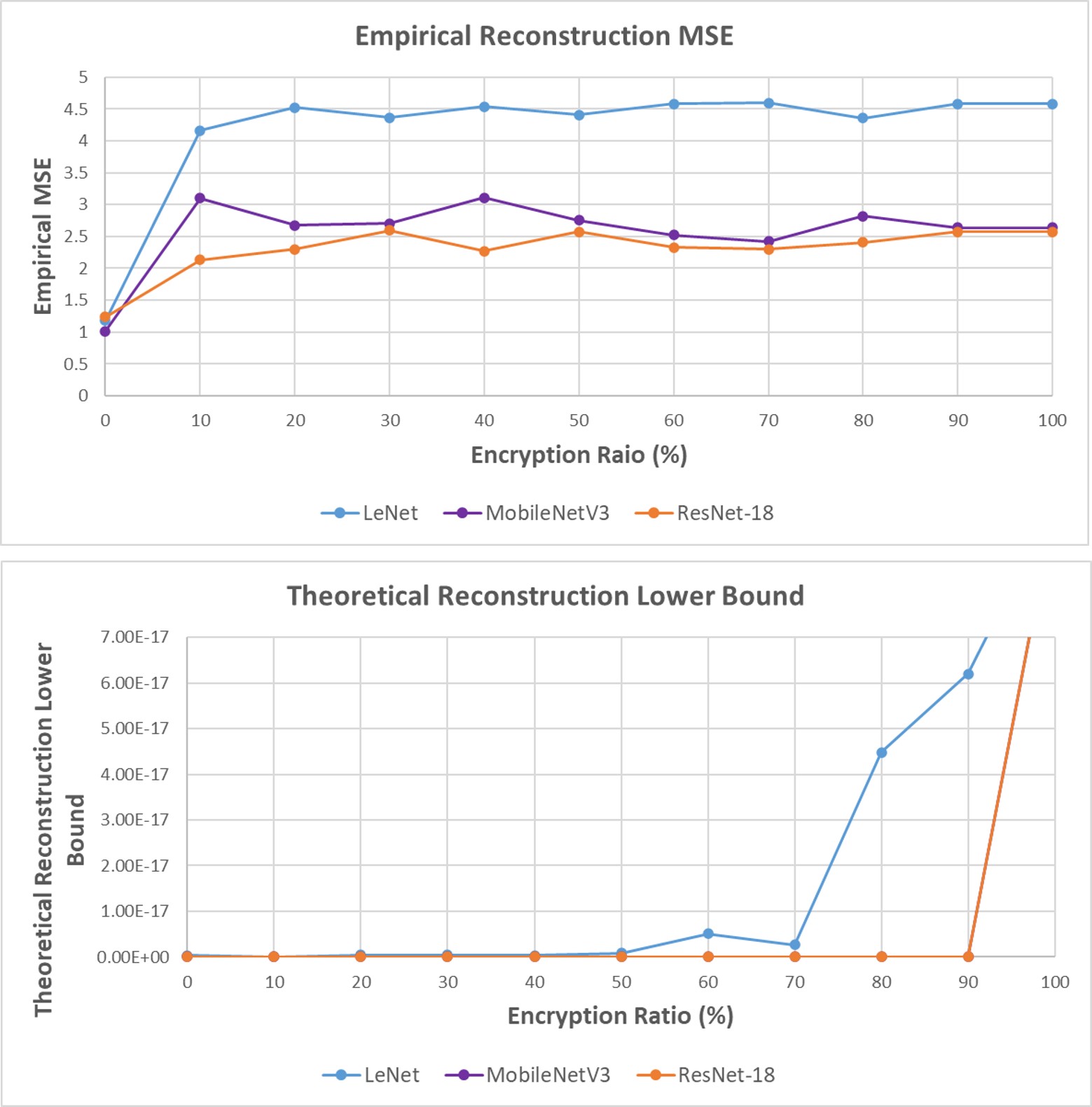}
    \caption{Effect of model complexity for reconstruction error. The theoretical reconstruction lower bounds for MobileNetV3 and ResNet-18 remain low up to a 90\% encryption ratio, appearing nearly overlapping in the visualization.}
    \vspace*{-5pt}
    \label{fig:model_complexity_effect}
\end{figure}

The third factor, gradient exposure, is illustrated in Figures \ref{fig:gradient_exposure_effect_lenet} through \ref{fig:gradient_exposure_effect_resnet18}. As the encryption ratio increases, gradient exposure decreases significantly, limiting the useful information available to attackers for reconstruction. This reduction in accessible gradient information directly contributes to higher reconstruction error, thereby enhancing defenses against model inversion attacks.

\begin{figure}[!htbp]
    \centering
    \includegraphics[scale=0.4]{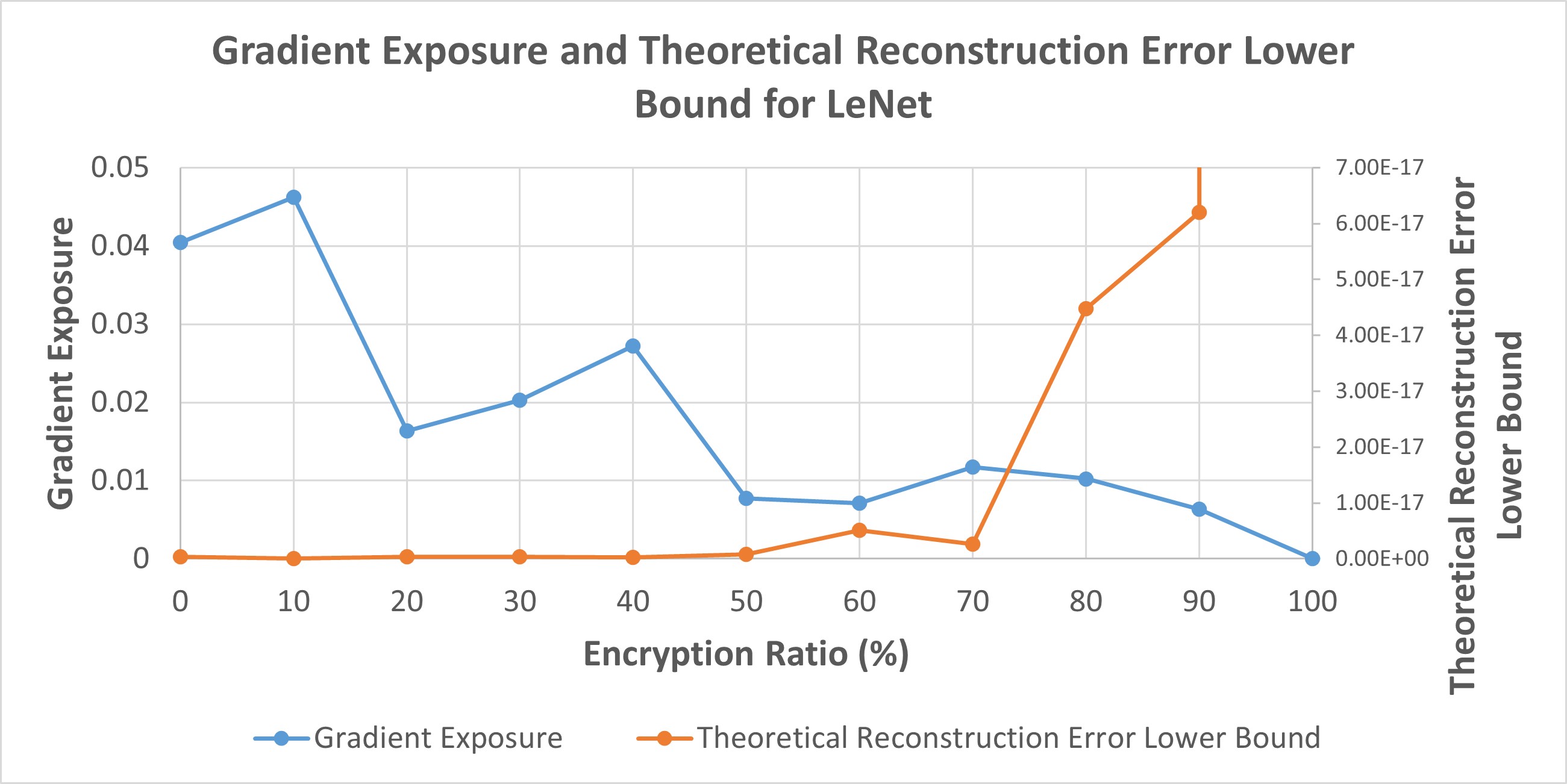}
    \caption{Effect of gradient exposure on theoretical reconstruction error lower bound for LeNet.}
    \vspace*{-5pt}
    \label{fig:gradient_exposure_effect_lenet}
\end{figure}

\begin{figure}[!htbp]
    \centering
    \includegraphics[scale=0.4]{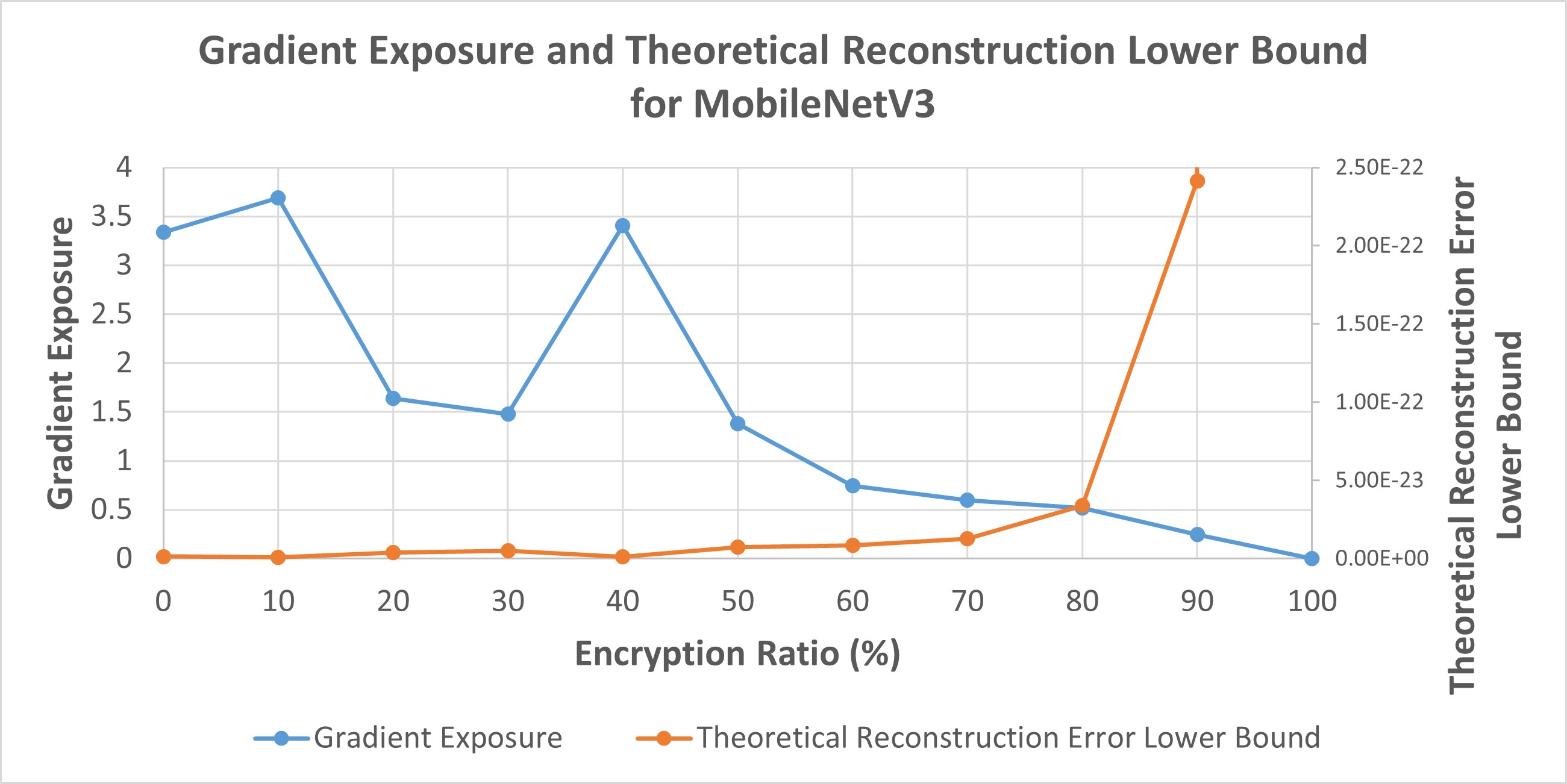}
    \caption{Effect of gradient exposure on theoretical reconstruction error lower bound for MobilenetV3.}
    \vspace*{-5pt}
    \label{fig:gradient_exposure_effect_mobilenetv3}
\end{figure}

\begin{figure}[!htbp]
    \centering
    \includegraphics[scale=0.4]{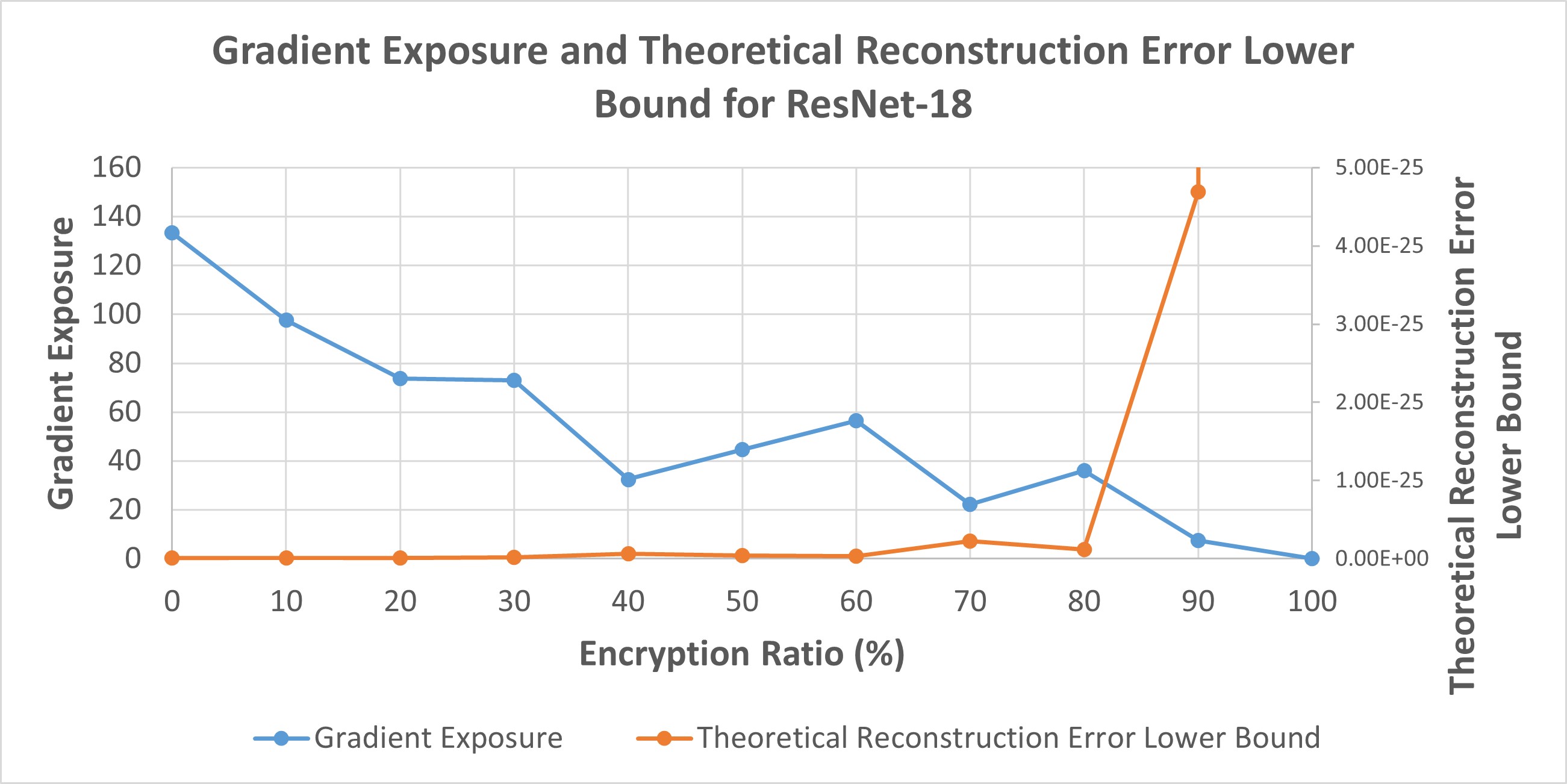}
    \caption{Effect of gradient exposure on theoretical reconstruction error lower bound for ResNet-18.}
    \vspace*{-5pt}
    \label{fig:gradient_exposure_effect_resnet18}
\end{figure}


\subsection{Determining the Upper Bound of Additive Noise}
In addition to the findings above, this section investigates the upper bound of additive noise that preserves training utility by analyzing loss convergence under varying noise levels. Training performance is evaluated by measuring the loss of the global model, aggregated from client updates, on the training dataset. The additive noise values are set to $\sigma = 0$, $\sigma = 10^{-6}$, $\sigma = 10^{-3}$, $\sigma = 1$, and $\sigma = 10$, with an encryption ratio of 0\%, to observe the effect of noise across all gradients.

\subsubsection{Empirical Findings}
Our experiments reveal three distinct convergence behaviors under different noise settings, as illustrated in Figures \ref{fig:loss_convergence_noise_lenet} through \ref{fig:loss_convergence_noise_resnet18}:

\begin{description}
  \item[\textbf{Low noise} ($\sigma = 10^{-6}$):] Training convergence closely approximates the noise-free baseline, indicating minimal impact on model performance.
  \item[\textbf{Moderate noise} ($\sigma = 10^{-3}$):] Training convergence begins to degrade, resulting in higher final loss values compared to the baseline after the training duration.
  \item[\textbf{High noise} ($\sigma \geq 1$):] Training fails to converge entirely, with loss values remaining consistently high throughout all training rounds.
\end{description}

\subsubsection{Upper Bound for Additive Noise based on Fixed Noise}
We evaluate the final training loss under varying noise settings and calculate the loss standard deviation relative to the baseline case (noise std = 0) to assess the deviation in training behavior under noise perturbations. As shown in Table \ref{tab:final_loss} and Table \ref{tab:model_performance}, when the noise standard deviation reaches $10^{-3}$ or higher, training convergence is significantly degraded, with models failing to converge entirely at noise levels of 1 or 10.

Finally, from these empirical results, we identify $10^{-6}$ as the practical upper bound for additive noise. This threshold is chosen because it satisfies the following critical conditions:

\begin{itemize}
    \item Training utility remains substantially unaffected
    \item The noise level remains consistent with the theoretical framework constraints
    \item Convergence behavior closely resembles the noise-free case
    \item It satisfies the mathematical requirements of the BCRLB while preserving training utility
\end{itemize}


\begin{figure}[!htbp]
    \centering
    \includegraphics[scale=0.25]{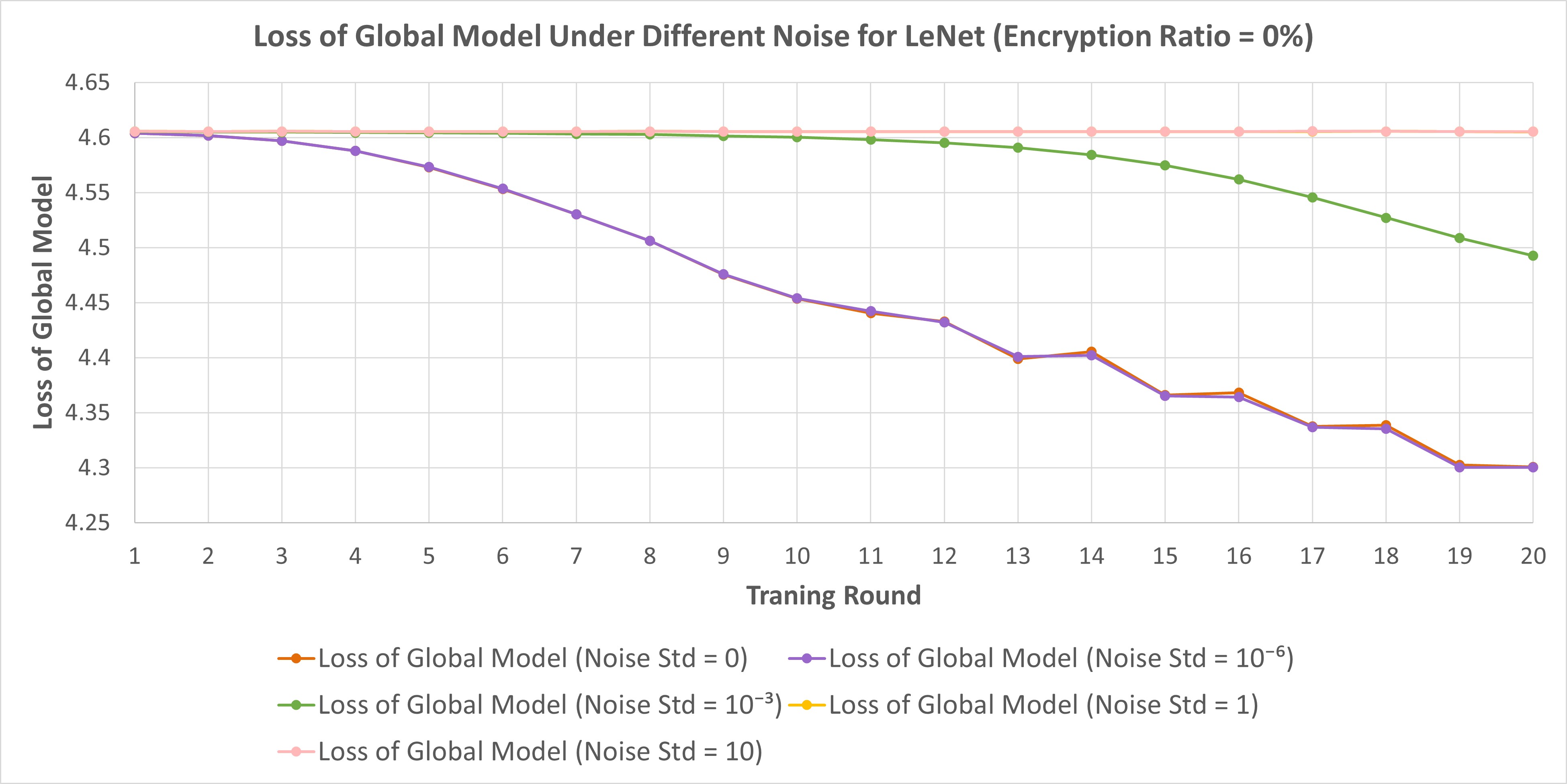}
    \caption{Loss convergence of the global model for LeNet.}
    \vspace*{-5pt}
    \label{fig:loss_convergence_noise_lenet}
\end{figure}

\begin{figure}[!htbp]
    \centering
    \includegraphics[scale=0.25]{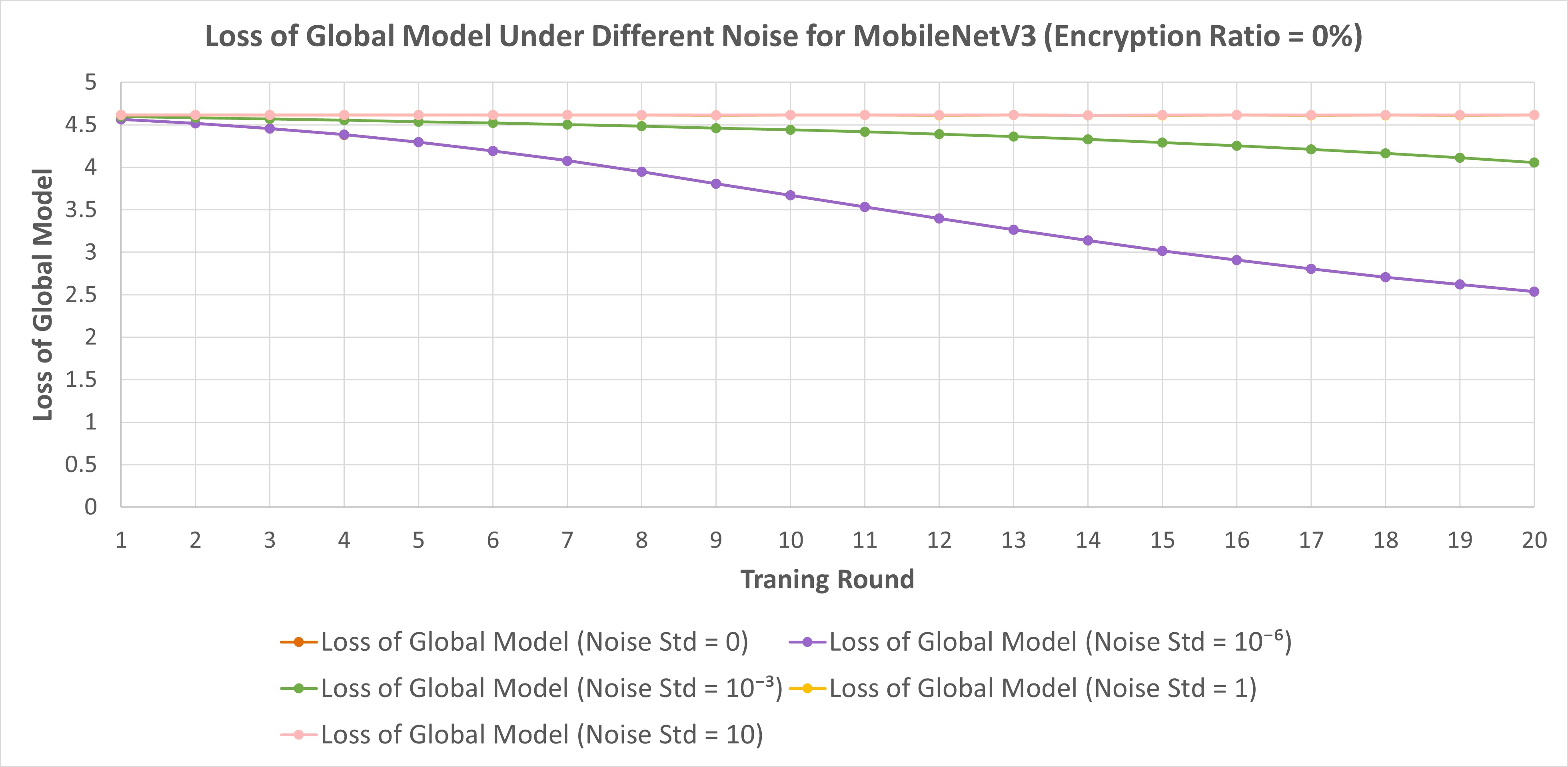}
    \caption{Loss convergence of the global model for MobileNetV3.}
    \vspace*{-5pt}
    \label{fig:loss_convergence_noise_mobilenetv3}
\end{figure}

\begin{figure}[!htbp]
    \centering
    \includegraphics[scale=0.25]{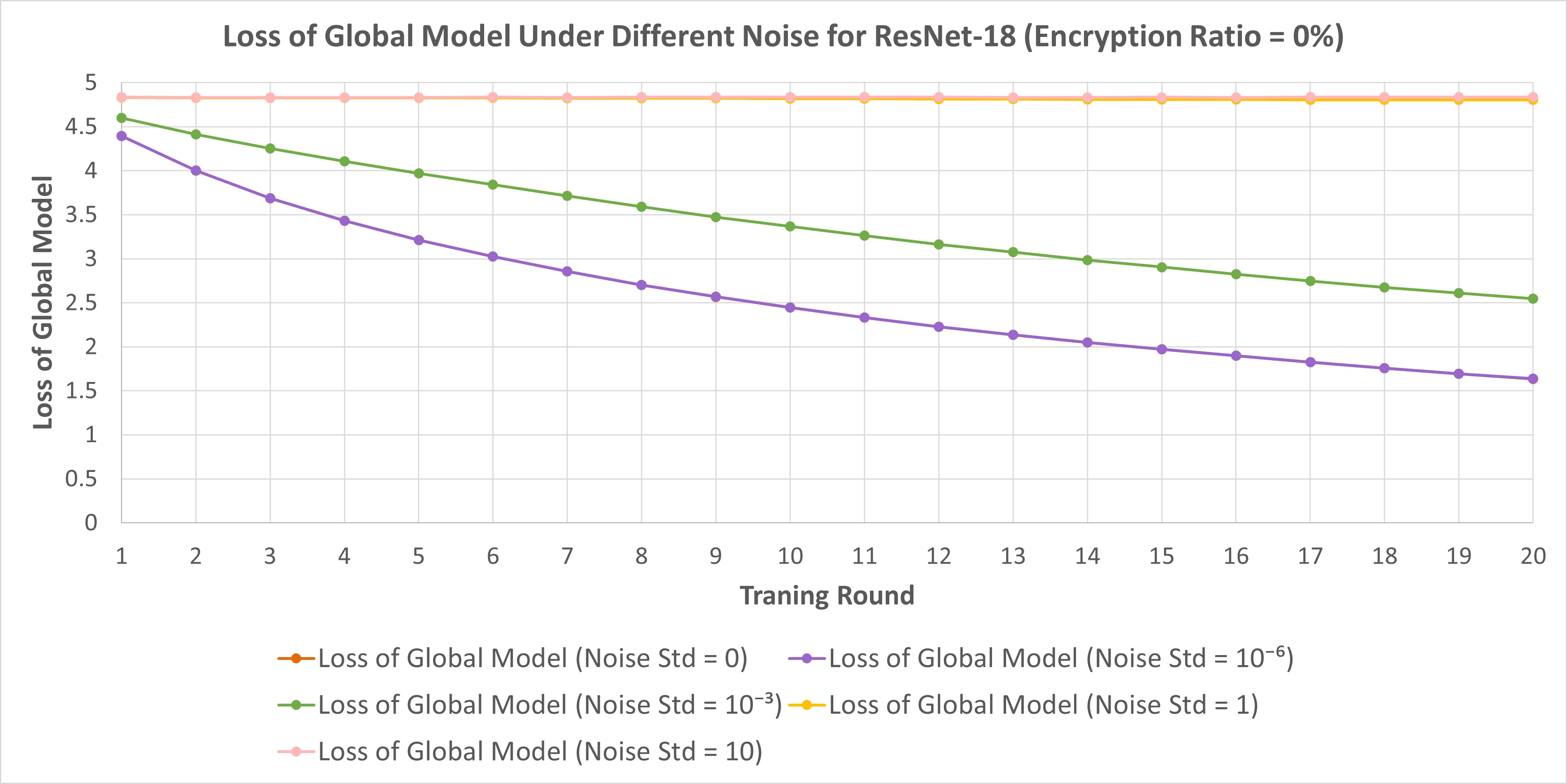}
    \caption{Loss convergence of the global model for ResNet-18.}
    \vspace*{-5pt}
    \label{fig:loss_convergence_noise_resnet18}
\end{figure}

\begin{table}[h!]
\centering
\caption{Final loss values across different noise settings.}
\label{tab:final_loss}
\begin{tabular}{|l|l|c|c|c|c|c|}
\hline
\multicolumn{2}{|c|}{\multirow{2}{*}{\textbf{Final Loss}}} & \multicolumn{5}{c|}{\textbf{Noise std}} \\
\cline{3-7}
\multicolumn{2}{|c|}{} & \textbf{0} & \textbf{$10^{-6}$} & \textbf{$10^{-3}$} & \textbf{1} & \textbf{10} \\
\hline
\multirow{3}{*}{\textbf{Model}} & \textbf{LeNet} & 4.300 & 4.300 & 4.493 & 4.606 & 4.606 \\
\cline{2-7}
& \textbf{MobileNetV3} & 2.537 & 2.536 & 4.057 & 4.613 & 4.614 \\
\cline{2-7}
& \textbf{ResNet-18} & 1.637 & 1.637 & 2.546 & 4.804 & 4.836 \\
\hline
\end{tabular}
\end{table}

\begin{table}[h!]
\centering
\caption{Loss standard deviation to baseline across different noise settings.}
\label{tab:model_performance}
\begin{tabular}{|l|l|c|c|c|c|c|}
\hline
\multicolumn{2}{|c|}{\multirow{2}{*}{\textbf{Loss std to Baseline}}} & \multicolumn{5}{c|}{\textbf{Noise std}} \\
\cline{3-7}
\multicolumn{2}{|c|}{} & \textbf{0} & \textbf{$10^{-6}$} & \textbf{$10^{-3}$} & \textbf{1} & \textbf{10} \\
\hline
\multirow{3}{*}{\textbf{Model}} & \textbf{LeNet} & 0 & 0.0015 & 0.0789 & 0.1010 & 0.1010 \\
\cline{2-7}
& \textbf{MobileNetV3} & 0 & 0.0004 & 0.5174 & 0.6613 & 0.6616 \\
\cline{2-7}
& \textbf{ResNet-18} & 0 & 0.0001 & 0.2064 & 0.7884 & 0.7974 \\
\hline
\end{tabular}
\end{table}

\begin{figure}[!htbp]
    \centering
    \includegraphics[scale=0.3]{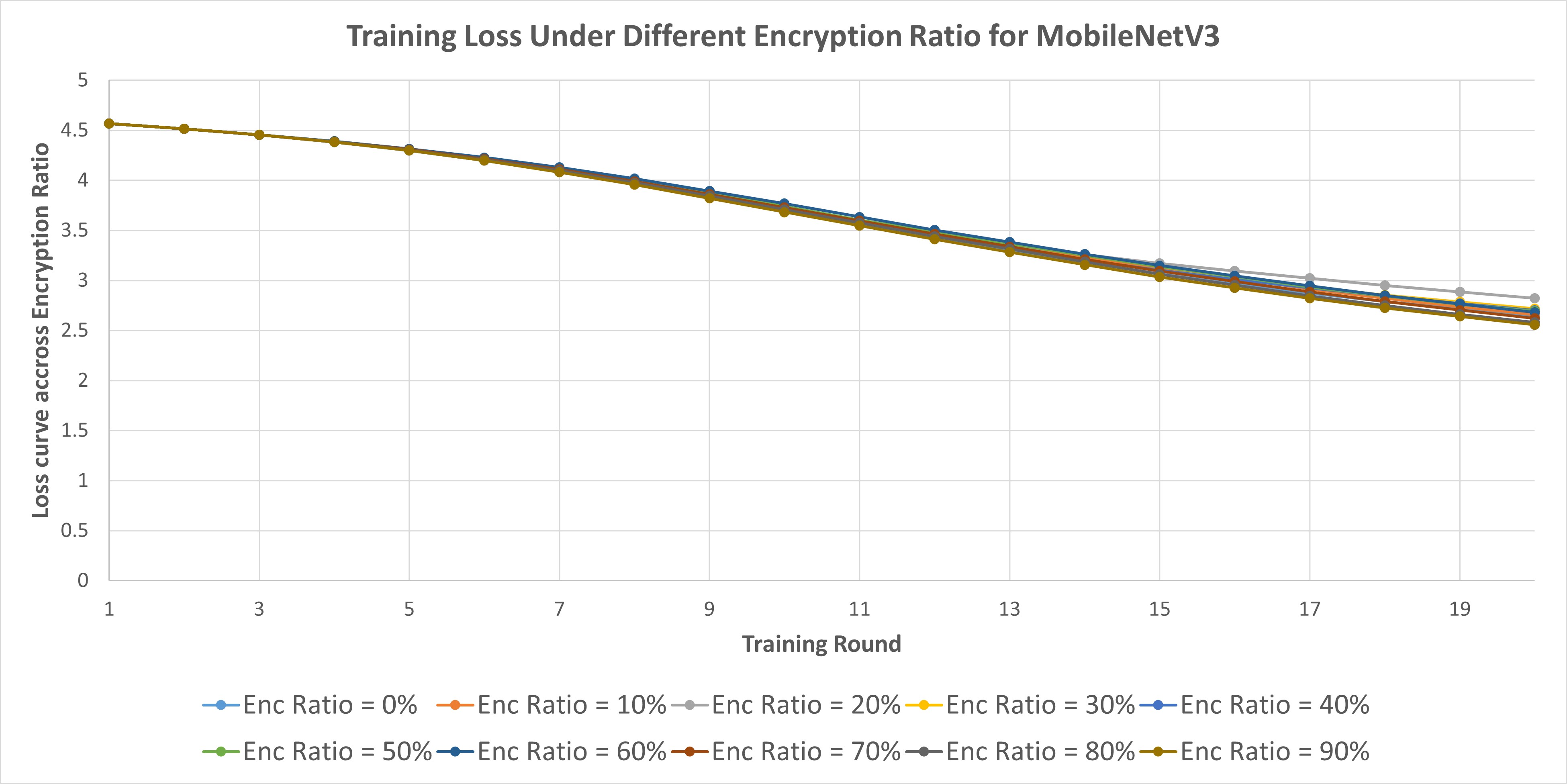}
    \caption{The training utility is not affected under different encryption ratio. The noise is initialized as $10^{-6}$ and changed adaptively over training rounds.}
    \vspace*{-5pt}
    \label{fig:loss_encryption_ratio}
\end{figure}

\begin{figure}[!htbp]
    \centering
    \includegraphics[scale=0.5]{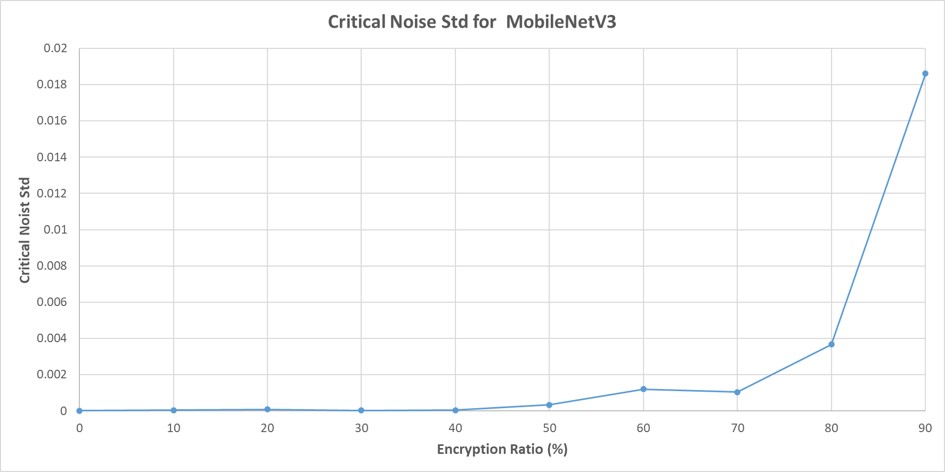}
    \caption{Critical noise under different encryption ratio for MobileNetV3.}
    \vspace*{-5pt}
    \label{fig:dynamic_critial_noise_mobilenetv3}
\end{figure}

\subsubsection{Effect of Encryption Ratio on Critical Noise}
The previous section established an upper bound for fixed additive noise. However, manual parameter tuning requires developers to explore various combinations to determine optimal noise levels. Building on the utility analysis in Section \ref{utility_reduction}, we now evaluate the critical additive noise under the selective encryption mechanism, which enables dynamic optimization for noise while preserving both model utility and privacy protection for unencrypted gradients. Equation \ref{Equation_critical_noise} proposes calculating critical noise for each training round based on the gradients available to potential attackers. First, training utility is not compromised under different encryption ratios with adaptive noise, as shown in Figure \ref{fig:loss_encryption_ratio}. Second, Figure \ref{fig:dynamic_critial_noise_mobilenetv3} illustrates how critical noise varies with encryption ratios for MobileNetV3. The results demonstrate a positive correlation between encryption ratio and critical noise levels. Higher encryption ratios leave fewer gradients unencrypted, thereby permitting increased noise injection without compromising model performance.


\needspace{10\baselineskip}
\section{Conclusion} \label{Conclusion}
In this paper, we presented a theoretical framework for analyzing model inversion attacks under selective encryption defense mechanisms. Our analysis establishes a lower bound on attack effectiveness and highlights key factors—including encryption ratio, model complexity, and gradient exposure—that govern an adversary’s ability to reconstruct private training data from shared gradients. To complement the theory, we conducted an empirical study that examines how these factors influence defense robustness. In particular, we analyzed the role of noise added to unencrypted gradients for deriving the Bayesian Cramér-Rao lower bound, identifying the minimal noise threshold required to balance model utility and privacy. These findings provide both theoretical and practical insights into the design of efficient selective encryption strategies for federated learning. They clarify how selective encryption mitigates model inversion attacks and offer concrete guidance for parameter selection in real-world applications. Looking ahead, we will extend this framework to language models in federated learning, beginning with small-scale models and progressively scaling to larger architectures. Our future work will also explore hybrid homomorphic encryption–based selective encryption methods and investigate quantum-resilient approaches to strengthen long-term privacy guarantees.


    
    

    



\onecolumn
\appendix

\section{MATHEMATICAL DERIVATIONS}

Here we provide the proof for Theorem 1 and 2 along with necessary Lemmas.

\begin{lemma}
    Let $J_F(x)$ be the Fisher information matrix as defined in (\ref{eq:bcrlb_JF}). With $u \sim \mathcal{N}(Rg(x), \sigma^2I)$ being the observed noisy gradients, we have:
\begin{equation}
    \label{lemma1}
    J_F(x) = \frac{1}{\sigma^2} \nabla_x g(x)^T R^T R\nabla_x g(x) \in \mathbb{R}^{m \times m} 
\end{equation}
\end{lemma}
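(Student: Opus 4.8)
The plan is to compute the score function $\nabla_x \log p(u\mid x)$ directly from the Gaussian likelihood and then take its second moment, exploiting the fact that the covariance $\sigma^2 I$ does not depend on $x$. Since $u\mid x \sim \mathcal{N}(Rg(x),\sigma^2 I)$ with $u\in\mathbb{R}^d$, the log-density is
\[
\log p(u\mid x) = -\tfrac{d}{2}\log(2\pi\sigma^2) - \frac{1}{2\sigma^2}\bigl\|u - Rg(x)\bigr\|_2^2 ,
\]
and the first term is constant in $x$, so it vanishes under $\nabla_x$.

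First I would differentiate the quadratic term via the chain rule. Writing $J_g(x):=\nabla_x g(x)\in\mathbb{R}^{D\times m}$ for the Jacobian of $g$ with respect to the data coordinates, and noting that $R$ is a fixed binary matrix so that $\nabla_x\bigl(Rg(x)\bigr)=R\,J_g(x)$, one gets
\[
\nabla_x \log p(u\mid x) = \frac{1}{\sigma^2}\,J_g(x)^T R^T\bigl(u - Rg(x)\bigr)\in\mathbb{R}^m .
\]
Next I would form the outer product and take the expectation over $u\mid x$; since $J_g(x)$ and $R$ are deterministic given $x$, they factor out, leaving
\[
J_F(x)=\mathbb{E}_{u\mid x}\!\left[\nabla_x\log p(u\mid x)\,\nabla_x\log p(u\mid x)^T\right]
=\frac{1}{\sigma^4}\,J_g(x)^T R^T\,\mathbb{E}_{u\mid x}\!\left[(u-Rg(x))(u-Rg(x))^T\right]R\,J_g(x).
\]
The inner expectation is exactly the covariance of $u$, namely $\sigma^2 I$, so one factor of $1/\sigma^2$ cancels and we obtain $J_F(x)=\tfrac{1}{\sigma^2}\,J_g(x)^T R^T R\,J_g(x)$, i.e.\ the claimed identity with $J_g(x)=\nabla_x g(x)$. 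I would also remark that $R^TR\in\mathbb{R}^{D\times D}$ is the diagonal $0/1$ matrix projecting onto the unencrypted coordinates, which is the form used in the proof of Theorem~\ref{theo:theorem1}.

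The computation is routine; the only point requiring care is the bookkeeping of Jacobian shapes and transpose conventions (keeping $\nabla_x g(x)$ as a $D\times m$ object so the product is the correct $m\times m$ matrix), together with the regularity conditions that license differentiating under the expectation, which are already assumed in the preliminaries. An equivalent one-line shortcut is to invoke the standard closed form for the Fisher information of a Gaussian location family with $x$-independent covariance, $J_F(x)=(\partial_x\mu(x))^T\Sigma^{-1}(\partial_x\mu(x))$ with $\mu(x)=Rg(x)$ and $\Sigma=\sigma^2 I$, which yields the result immediately.
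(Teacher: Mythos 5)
Your proposal is correct and follows essentially the same route as the paper's proof: compute the Gaussian score $\nabla_x\log p(u\mid x)=\tfrac{1}{\sigma^2}\nabla_x g(x)^T R^T(u-Rg(x))$, form the outer product, and use $\mathbb{E}_{u\mid x}[(u-Rg(x))(u-Rg(x))^T]=\sigma^2 I$ to cancel one factor of $\sigma^{-2}$. Your added remarks on Jacobian shape bookkeeping and the standard Gaussian location-family shortcut are sound but do not change the argument.
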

\begin{proof}
    Starting with the definition of $J_F(x)$ from (\ref{eq:bcrlb_JF}), i.e., $
J_F(x) = \mathbb{E}_{u|x}\left[\nabla_x \log p(u|x) \nabla_x \log p(u|x)^T\right]$, we compute the gradient of the log-likelihood function:
\begin{align*}
    \nabla_x \log p(u|x) = -\frac{1}{2} \nabla_x ((u - R g(x))^T (\sigma^{2} I)^{-1} (u - R g(x)))
\end{align*}
where $$(\sigma^2 I) = \Sigma$$
Expanding the expression:
\begin{align*}
    \nabla_x \log p(u | x) &= -\frac{1}{2} \nabla_x \left((u - R g(x))^T \Sigma^{-1} (u - R g(x))\right) \\
    &= -\frac{1}{2} \nabla_x \left( u^T \Sigma^{-1} u - 2 g(x)^T R^T \Sigma^{-1} u + g(x)^T R^T \Sigma^{-1} R g(x) \right) \\
    &= -\frac{1}{2} \left( 0 - 2 \nabla_x g(x)^T R^T \Sigma^{-1} u + 2 \nabla_x g(x)^T R^T \Sigma^{-1} R g(x) \right) \\
    &= \nabla_x g(x)^T R^T \Sigma^{-1} u - \nabla_x g(x)^T R^T \Sigma^{-1} R g(x) \\
    &= \nabla_x g(x)^T R^T \Sigma^{-1} (u - Rg(x))
\end{align*}
Apply the gradient of the log-likelihood function into the Fisher Information matrix:
\begin{align*}
    \mathbb{E}_{u|x} \left[ \nabla_x \log p(u|x) \cdot \nabla_x \log p(u|x)^T \right] &= \mathbb{E}_{u|x} \left[ (\nabla_x g(x)^T R^T \Sigma^{-1} (u - R g(x)) \cdot ((u - R g(x))^T \Sigma^{-1} R \nabla_x g(x)) \right]
\end{align*}
Since:
\begin{align*}
    \mathbb{E}_{u|x} \left[ (u - R g(x))(u - R g(x))^T \right] = \Sigma
\end{align*}
we obtain:
\begin{align*}
    \mathbb{E}_{u|x} \left[ \nabla_x \log p(u|x) \cdot \nabla_x \log p(u|x)^T \right] &= \nabla_x g(x)^T R^T \Sigma^{-1} \mathbb{E}_{u|x} \left[ (u - R_g(x))(u - R_g(x))^T \right] \Sigma^{-1} R \nabla_x g(x) \\
    &= \nabla_x g(x)^T R^T \Sigma^{-1} \Sigma \Sigma^{-1} R \nabla_x g(x) \\
    &= \nabla_x g(x)^T R^T \Sigma^{-1} R \nabla_x g(x)
\end{align*}
Thus, the Fisher Information matrix is:
\begin{align*}
    J_F(x) = \frac{1}{\sigma^2} \nabla_x g(x)^T R^T R \nabla_x g(x) \quad \in \mathbb{R}^{m \times m}
\end{align*}
\end{proof}

\subsection{\textbf{Proof of Theorem 1}} \label{derive_reconstruction_lower_bound}
Here we start with the BCRLB based reconstruction error lower bound developed in Chen et al.~\cite{b34}. It should be noted that owing to an inconsistency in their proof, our equation (\ref{app1}) looks different Theorem 1 in \cite{b34}. Hence, with $E_A$ representing the reconstruction error as defined in (\ref{eq:reconstruction_error}), and with the definition of $J_B$ as defined in (\ref{Jb}) and its relationship with reconstruction error (\ref{eq:bcrlb_main}) we have :
\begin{equation}
    E_A \geq tr(J_B^{-1}) \geq \frac{m^2}{\mathbb{E}_{x\sim \mathcal{X}}[\text{tr}(J_F(x))] + m \cdot \lambda_1(J_P)}
    \label{app1}
\end{equation}

Expanding the trace term we get:
\begin{equation*}
\begin{split}
    \mathbb{E}_{x\sim \mathcal{X}}[\text{tr}(J_F(x))] &= \mathbb{E}_{x\sim \mathcal{X}}\left[\text{tr}\left(\frac{1}{\sigma^2}\nabla_x g(x)^T R^T R\nabla_x g(x)\right)\right], \quad \text{(from Lemma 1)}\\
    &= \frac{1}{\sigma^2}\mathbb{E}_{x\sim \mathcal{X}}\left[\sum_{i=1}^{m}\|R\nabla_{x_i}g(x)\|_2^2\right]\\
    &= \frac{1}{\sigma^2}\mathbb{E}_{x\sim \mathcal{X}}\left[\sum_{i=1}^{m}\sum_{j=1}^{d}(R\nabla_{x_i}g(x))[j]^2\right]\\
    &= \frac{1}{\sigma^2}\mathbb{E}_{x \sim X} \left[\sum_{i=1}^{m} \sum_{j=1}^{d} (R \nabla_{x} g(x))[j, i]^2\right] \\
    &\leq \frac{md}{\sigma^2}\mathbb{E}_{x\sim \mathcal{X}}\|R\nabla_x g(x)\|^2_{\max}
\end{split}
\end{equation*}

Assuming the gradient of the model gradient vector $g(x)$ with respect to data $x$ quantifies its sensitivity with respect to each data feature $x_i$, we obtain the last inequality by extracting the max sensitivity of an unencrypted gradient coordinate with respect to a data feature. By noting, the relationship between unencrypted parameters $d$ and total parameters $D$ with encryption ratio $z$ we obtain:
\begin{equation*}
    z = \frac{D-d}{D} \Rightarrow d = D(1-z)
\end{equation*}

Substituting these results in (\ref{app1}) we get:
\begin{equation*}
\begin{split}
    E_A &\geq \frac{m^2}{\mathbb{E}_{x\sim D}[\text{tr}(J_F(x))] + m \cdot \lambda_1(J_P)}\\
    &\geq \frac{m^2}{\frac{md}{\sigma^2}\mathbb{E}_{x\sim D}\|R\nabla_x g(x)\|^2_{\max} + m \cdot \lambda_1(J_P)}\\
    &= \frac{m}{\frac{d}{\sigma^2}\mathbb{E}_{x\sim D}\|R\nabla_x g(x)\|^2_{\max} + \lambda_1(J_P)}\\
    &= \frac{m}{\frac{D(1-z)}{\sigma^2}\mathbb{E}_{x\sim D}\|R\nabla_x g(x)\|^2_{\max} + \lambda_1(J_P)}
\end{split}
\end{equation*}
This final expression shows how the encryption ratio $z$, model parameters $D$, noise variance $\sigma^2$ directly influences the reconstruction error lower bound.

\hfill $\square$

\vspace{0.1in}

\begin{lemma}[Gaussian norm concentration for summed client noise]
\label{lemma2}
Let $\epsilon_1,\ldots,\epsilon_n$ be i.i.d.\ $\mathcal N(0,\sigma^2 I_d)$ and set $S:=\sum_{i=1}^n \epsilon_i$. For any $\delta\in(0,1)$,
\[
\Pr\!\Big(\,\|S\| \;\ge\; \sigma\sqrt{n}\,\big(\sqrt{d}+\sqrt{2\ln(1/\delta)}\big)\Big)\;\le\;\delta.
\]
Equivalently, with probability at least $1-\delta$,
\[
\|S\|^2 \;\le\; \sigma^2 n\Big(d+2\sqrt{d\ln(1/\delta)}+2\ln(1/\delta)\Big).
\]
\end{lemma}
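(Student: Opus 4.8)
The plan is to collapse the sum of independent client noises into a single Gaussian vector and then apply a standard concentration-of-measure argument. First I would note that, because the $\epsilon_i$ are independent $\mathcal{N}(0,\sigma^2 I_d)$, their sum satisfies $S=\sum_{i=1}^n\epsilon_i\sim\mathcal{N}(0,n\sigma^2 I_d)$, so $S$ is distributed as $\sigma\sqrt{n}\,Z$ with $Z\sim\mathcal{N}(0,I_d)$ and hence $\|S\|$ is distributed as $\sigma\sqrt{n}\,\|Z\|$. This reduces the claim to the dimension-free statement $\Pr\!\big(\|Z\|\ge\sqrt{d}+\sqrt{2\ln(1/\delta)}\big)\le\delta$.

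Next I would control $\|Z\|$ using Gaussian concentration. The Euclidean norm $z\mapsto\|z\|$ is $1$-Lipschitz, so the Gaussian Lipschitz concentration inequality yields $\Pr\!\big(\|Z\|\ge\mathbb{E}\|Z\|+t\big)\le e^{-t^2/2}$ for all $t\ge 0$. Jensen's inequality gives $\mathbb{E}\|Z\|\le\big(\mathbb{E}\|Z\|^2\big)^{1/2}=\sqrt{d}$, so $\Pr\!\big(\|Z\|\ge\sqrt{d}+t\big)\le e^{-t^2/2}$; setting $t=\sqrt{2\ln(1/\delta)}$ makes the right-hand side exactly $\delta$. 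Rescaling by $\sigma\sqrt{n}$ recovers the first displayed inequality.

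For the squared form I would square the complementary event; expanding $\big(\sqrt{d}+\sqrt{2\ln(1/\delta)}\big)^2=d+2\sqrt{2d\ln(1/\delta)}+2\ln(1/\delta)$ gives a valid (slightly looser) upper bound on $\|S\|^2$. To obtain the sharper cross term $2\sqrt{d\ln(1/\delta)}$ written in the statement, I would instead invoke the Laurent--Massart tail bound for the $\chi^2_d$ variable $\|Z\|^2$ (equivalently the quadratic-form tail inequality of~\cite{b35}), which directly gives $\Pr\!\big(\|Z\|^2\ge d+2\sqrt{d\ln(1/\delta)}+2\ln(1/\delta)\big)\le\delta$ and also implies the norm form, since $d+2\sqrt{dx}+2x\le(\sqrt{d}+\sqrt{2x})^2$.

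The argument has no serious obstacle; the one point that needs care is the passage between the two displayed inequalities, since naive squaring of the norm bound produces the cross term $2\sqrt{2d\ln(1/\delta)}$ rather than the $2\sqrt{d\ln(1/\delta)}$ stated, so one should either relax that constant or route the squared estimate through the $\chi^2$ (Laurent--Massart) tail bound.
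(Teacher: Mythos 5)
Your proposal is correct, and it reaches the norm bound by a genuinely different key lemma than the paper. The paper's proof works entirely through the $\chi^2_d$ tail: it normalizes $Z=S/(\sigma\sqrt{n})$, applies the Laurent--Massart bound $\Pr\big(\|Z\|^2>d+2\sqrt{dt}+2t\big)\le e^{-t}$ with $t=\ln(1/\delta)$ to get the squared form first, and then derives the norm form from it via $\sqrt{d+2\sqrt{du}+2u}\le\sqrt{d}+\sqrt{2u}$. You instead obtain the norm form directly from Gaussian Lipschitz concentration of the $1$-Lipschitz map $z\mapsto\|z\|$ together with Jensen's bound $\mathbb{E}\|Z\|\le\sqrt{d}$, which is a more elementary and self-contained route that yields identical constants; since the proof of Theorem~2 only ever invokes the norm form (through Cauchy--Schwarz), that argument alone would suffice for the application. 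Your observation about the passage between the two displays is also the right one and is exactly the subtlety the paper's ordering is designed to avoid: naive squaring of the norm bound produces the cross term $2\sqrt{2d\ln(1/\delta)}$ rather than the sharper $2\sqrt{d\ln(1/\delta)}$, so the squared form must be established first (via Laurent--Massart, as you and the paper both do) and the norm form deduced from it --- indeed the word ``Equivalently'' in the lemma statement is a slight overstatement, as the squared form is strictly stronger. Net effect: your hybrid proof is valid and, for the norm bound, arguably cleaner; to get the stated squared bound you correctly fall back on the same chi-square tail inequality the paper uses.
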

\begin{proof}
Since $\epsilon_i\sim \mathcal N(0,\sigma^2 I_d)$ are independent, $S\sim \mathcal N(0,n\sigma^2 I_d)$. Normalize $Z:=S/(\sigma\sqrt{n})\sim \mathcal N(0,I_d)$, so $\|Z\|^2\sim \chi^2_d$. The Laurent--Massart upper tail for $\chi^2_d$ (special case of Proposition 1 in \cite{b35}) states that for all $t\ge 0$,
\[
\Pr\!\big(\|Z\|^2 > d+2\sqrt{dt}+2t\big)\le e^{-t}.
\]
Taking $t=\ln(1/\delta)$ yields, with probability at least $1-\delta$,
\[
\|Z\|^2 \le d+2\sqrt{d\ln(1/\delta)}+2\ln(1/\delta).
\]
Rescaling back gives the stated squared--norm bound $\|S\|^2 \le \sigma^2 n\big(d+2\sqrt{d\ln(1/\delta)}+2\ln(1/\delta)\big)$. For a cleaner norm bound, take square roots and use $\sqrt{d+2\sqrt{du}+2u}\le \sqrt d+\sqrt{2u}$ with $u=\ln(1/\delta)$, which implies
\[
\|S\|\le \sigma\sqrt{n}\big(\sqrt d+\sqrt{2\ln(1/\delta)}\big)
\]
with probability at least $1-\delta$.
\end{proof}

\subsection{\textbf{Proof of Theorem 2}}
\label{derive_privacy_utility}
\begin{proof}
A first-order Taylor expansion at $\theta$ along $\theta^+=\theta-\eta Q(x)$ gives
\[
L_i(x,\theta^{+}) \approx L_i(x,\theta) - \eta\, g_i(x)^{\!\top} Q(x),
\]
hence, conditioning on the realized noises and taking expectation over $x$,
\[
\mathbb{E}_x\!\big[L_i(x,\theta)-L_i(x,\theta^{+})\big]
\;\approx\;
\eta\,\mathbb{E}_x\!\big[g_i(x)^{\!\top}G(x)\big]
+\eta\,\mathbb{E}_x[g_i(x)]^{\!\top}P\sum_{j=1}^{n}\epsilon_j.
\]
With $P=R^\top$ we have $\mathbb{E}_x[g_i(x)]^{\!\top}P=\big(\mathbb{E}_x[Rg_i(x)]\big)^{\!\top}=\mu_i^{\!\top}$, so
\[
\mathbb{E}_x\!\big[L_i(x,\theta)-L_i(x,\theta^{+})\big]
\;\approx\;
\eta\Big(B_i+\mu_i^{\!\top}\sum_{j=1}^{n}\epsilon_j\Big).
\]
By Cauchy--Schwarz,
\[\mu_i^{\!\top}\sum_{j=1}^{n}\epsilon_j \ge -\|\mu_i\|\,\big\|\sum_{j=1}^{n}\epsilon_j\big\|.\]
Since $\sum_{j=1}^{n}\epsilon_j\sim\mathcal N(0,n\sigma^2 I_d)$, using Lemma 2, for any $\delta\in(0,1)$, with probability at least $1-\delta$,
\[
\Big\|\sum_{j=1}^{n}\epsilon_j\Big\|
\;\le\;
\sigma\sqrt{n}\,\Big(\sqrt{d}+\sqrt{2\ln(1/\delta)}\Big).
\]
Combining, with probability at least $1-\delta$,
\[
\mathbb{E}_x\!\big[L_i(x,\theta)-L_i(x,\theta^{+})\big]
\;\gtrsim\;
\eta\Big(B_i-\sigma\sqrt{n}\,\|\mu_i\|\,(\sqrt{d}+\sqrt{2\ln(1/\delta)})\Big),
\]
which is nonnegative whenever
\[\sigma \le \dfrac{B_i}{\sqrt{n}\,\|\mu_i\|\,(\sqrt{d}+\sqrt{2\ln(1/\delta)})}\].

If $B_i\le 0$, the right-hand will can be non-positive. Hence, even at $\sigma=0$, ascensive steps can occur. This establishes the claim (up to first-order approximation error).
\end{proof}


\begin{thebibliography}{00}



































\bibitem{b1} Zhu, Ligeng, Zhijian Liu, and Song Han. "Deep leakage from gradients." Advances in Neural Information Processing Systems 32 (2019).

\bibitem{b2} Zhao, Bo, Konda Reddy Mopuri, and Hakan Bilen. "iDLG: Improved deep leakage from gradients." arXiv preprint arXiv:2001.02610 (2020).

\bibitem{b3} Geiping, Jonas, et al. "Inverting gradients—how easy is it to break privacy in federated learning?" Advances in Neural Information Processing Systems 33 (2020): 16937–16947.

\bibitem{b4} Jin, Weizhao, et al. "FedML-HE: An efficient homomorphic-encryption-based privacy-preserving federated learning system." arXiv preprint arXiv:2303.10837 (2023).

\bibitem{b5} Huang, Ren-Yi, Dumindu Samaraweera, and J. Morris Chang. "Toward Efficient Homomorphic Encryption-Based Federated Learning: A Magnitude-Sensitivity Approach." 2024 IEEE International Conference on Big Data (BigData). IEEE, 2024.

\bibitem{b6} McMahan, Brendan, et al. "Communication-efficient learning of deep networks from decentralized data." Artificial Intelligence and Statistics. PMLR, 2017.

\bibitem{b7} Huang, Ren-Yi, Dumindu Samaraweera, and J. Morris Chang. "Exploring Threats, Defenses, and Privacy-Preserving Techniques in Federated Learning: A Survey." Computer 57.4 (2024): 46-56.

\bibitem{b8} Zhang, Zhengming, et al. "Neurotoxin: Durable backdoors in federated learning." International Conference on Machine Learning. PMLR, 2022.

\bibitem{b9} Fung, Clement, Chris JM Yoon, and Ivan Beschastnikh. "Mitigating sybils in federated learning poisoning." arXiv preprint arXiv:1808.04866 (2018).

\bibitem{b10} Cao, Xiaoyu, and Neil Zhenqiang Gong. "Mpaf: Model poisoning attacks to federated learning based on fake clients." Proceedings of the IEEE/CVF Conference on Computer Vision and Pattern Recognition. 2022.

\bibitem{b11} Fraboni, Yann, Richard Vidal, and Marco Lorenzi. "Free-rider attacks on model aggregation in federated learning." International Conference on Artificial Intelligence and Statistics. PMLR, 2021.

\bibitem{b12} Nguyen, Truc, et al. "Active membership inference attack under local differential privacy in federated learning." \emph{arXiv preprint} \url{arXiv:2302.12685} (2023).

\bibitem{b13} Gao, Jiqiang, et al. "Secure aggregation is insecure: Category inference attack on federated learning." IEEE Transactions on Dependable and Secure Computing (2021).

\bibitem{b14} Yazdinejad, Abbas, et al. "A robust privacy-preserving federated learning model against model poisoning attacks." IEEE Transactions on Information Forensics and Security (2024).

\bibitem{b15} Fang, Haokun, and Quan Qian. "Privacy preserving machine learning with homomorphic encryption and federated learning." Future Internet 13.4 (2021): 94.

\bibitem{b16} Liu, Xiaoyuan, et al. "Privacy-enhanced federated learning against poisoning adversaries." IEEE Transactions on Information Forensics and Security 16 (2021): 4574–4588.

\bibitem{b17} Paillier, Pascal. "Public-key cryptosystems based on composite degree residuosity classes." International Conference on the Theory and Applications of Cryptographic Techniques. Springer Berlin Heidelberg, 1999.

\bibitem{b18} Bavdekar, Ritik, et al. "Post quantum cryptography: Techniques, challenges, standardization, and directions for future research." arXiv preprint arXiv:2202.02826 (2022).

\bibitem{b19}{ElGamal, Taher. "A public key cryptosystem and a signature scheme based on discrete logarithms." IEEE transactions on information theory 31.4 (1985): 469-472.}

\bibitem{b20}{Chen, Yange, et al. "A privacy-preserving federated learning framework with lightweight and fair in iot." IEEE Transactions on Network and Service Management (2024).}

\bibitem{b21}{Shor, Peter W. "Algorithms for quantum computation: discrete logarithms and factoring." Proceedings 35th annual symposium on foundations of computer science. Ieee, 1994.}

\bibitem{b22} {Bernstein, Daniel J., and Tanja Lange. "Post-quantum cryptography." Nature 549.7671 (2017): 188-194.}

\bibitem{b23} Brakerski, Zvika. "Fully homomorphic encryption without modulus switching from classical GapSVP." Annual Cryptology Conference. Springer Berlin Heidelberg, 2012.

\bibitem{b24} Fan, Junfeng, and Frederik Vercauteren. "Somewhat practical fully homomorphic encryption." Cryptology ePrint Archive (2012).

\bibitem{b25} Brakerski, Zvika, Craig Gentry, and Vinod Vaikuntanathan. "(Leveled) fully homomorphic encryption without bootstrapping." ACM Transactions on Computation Theory (TOCT) 6.3 (2014): 1–36.

\bibitem{b26} Chillotti, Ilaria, et al. "TFHE: Fast fully homomorphic encryption over the torus." Journal of Cryptology 33.1 (2020): 34–91.

\bibitem{b27} Cheon, Jung Hee, et al. "Homomorphic encryption for arithmetic of approximate numbers." Advances in Cryptology–ASIACRYPT 2017. Springer International Publishing, 2017.

\bibitem{b28} Yan, Nan, et al. "Efficient and straggler-resistant homomorphic encryption for heterogeneous federated learning." \emph{IEEE INFOCOM 2024 -- IEEE Conference on Computer Communications}. IEEE, 2024.

\bibitem{b29} Clet, Pierre-Emmanuel, Oana Stan, and Martin Zuber. "BFV, CKKS, TFHE: Which one is the best for a secure neural network evaluation in the cloud?" ACNS 2021 Satellite Workshops. Springer International Publishing, 2021.

\bibitem{b30} Stoian, Andrei, et al. "Deep neural networks for encrypted inference with TFHE." International Symposium on Cyber Security, Cryptology, and Machine Learning. Springer Nature Switzerland, 2023.

\bibitem{b31} Kara, Mostefa, et al. "An enhanced learning with error-based cryptosystem: A lightweight quantum-secure cryptography method." J 7.4 (2024): 406–420.

\bibitem{b32} Hu, Chenghao, and Baochun Li. "MASKCRYPT: Federated learning with selective homomorphic encryption." IEEE Transactions on Dependable and Secure Computing (2024).

\bibitem{b33}Crafts, Evan Scope, Xianyang Zhang, and Bo Zhao. "Bayesian Cramér-Rao bound estimation with score-based models." IEEE Transactions on Information Theory (2024).

\bibitem{b34}Chen, Yuxiao, Gamze Gürsoy, and Qi Lei. "Optimal Defenses Against Gradient Reconstruction Attacks." arXiv preprint arXiv:2411.03746 (2024).

\bibitem{b35} Hsu, Daniel, Sham M. Kakade, and Tong Zhang. "A tail inequality for quadratic forms of subgaussian random vectors." arXiv preprint arXiv:1110.2842 (2011).

\bibitem{b36} Liu, Sheng, Zihan Wang, Yuxiao Chen, and Qi Lei. "Data reconstruction attacks and defenses: A systematic evaluation." arXiv preprint arXiv:2402.09478 (2024).



\end{thebibliography}
\end{document}